\let\mathcal\mathscr
\let\phi=\varphi
\let\kappa=\varkappa
\DeclareMathOperator{\sym}{sym}
\DeclareMathOperator{\ad}{ad}
\newcommand*{\eval}[1]{\left.#1\right|}
\newcommand*{\abs}[1]{\left|#1\right|}
\newcommand*{\Ev}{\mathbf{E}}
\newcommand*{\Ex}{\mathbf{X}}
\theoremstyle{theorem}
\newtheorem{proposition}{Proposition}
\newtheorem{corollary}{Corollary}
\newtheorem{theorem}{Theorem}
\theoremstyle{definition}
\theoremstyle{remark}
\let\mathcal\mathscr
\newcommand{\cprime}{\/{\mathsurround=0pt$'$}}
\begin{document}
\date{\today}
\title[Coverings and nonlocal symmetries of Lax integrable
equations]{Coverings over Lax integrable equations and their nonlocal
  symmetries
} \author{H.~Baran} \address{Mathematical Institute, Silesian University in
  Opava, Na Rybn\'{\i}\v{c}ku 1, 746 01 Opava, Czech Republic}
\email{Hynek.Baran@math.slu.cz} \author{I.S.~Krasil{\cprime}shchik}
\address{Independent University of Moscow, B. Vlasevsky 11, 119002 Moscow,
  Russia} \email{josephkra@gmail.com} \author{O.I.~Morozov} \address{Faculty
  of Applied Mathematics, AGH University of Science and Technology,
  Al. Mickiewicza 30, Krak\'ow 30-059, Poland}
\email{morozov{\symbol{64}}agh.edu.pl} \author{P.~Voj{\v{c}}{\'{a}}k}
\address{Mathematical Institute, Silesian University in Opava, Na
  Rybn\'{\i}\v{c}ku 1, 746 01 Opava, Czech Republic}
\email{Petr.Vojcak@math.slu.cz}

\date{\today}

\begin{abstract}
  Using the Lax representation with non-removable parameter, we construct two
  hierarchies of nonlocal conservation laws for the 3D rdDym equation
  $u_{ty} = u_xu_{xy} - u_yu_{xx}$ and describe the algebras of nonlocal
  symmetries in the corresponding coverings.
\end{abstract}

\keywords{Partial differential equations, 3D rdDym equation, nonlocal
  symmetries, recursion operators}

\subjclass[2010]{35B06}

\maketitle

\tableofcontents
\newpage

\section*{Introduction}
\label{sec:introduction}
The 3D rdDym equation~$\mathcal{E}$, \cite{Blaszak,Pavlov2003,Morozov2009}, is
an example of nonlinear integrable equations in three independent
variables. In\-te\-gra\-bi\-li\-ty here means the existence of a Lax pair with
non-removable parameter. Such equations were studied rather intensively in
recent papers, \cite{Ovsienko2010,Fer-Mos}.  In par\-ti\-cu\-lar, in our
recent papers~\cite{BKMV-2014,BKMV-2015} we gave a full de\-scrip\-ti\-on of
2D symmetry reductions for four of such equations and discussed integrability
properties of the reductions.

Using the Lax integrability of the 3D rdDym equation, we construct two
infinite hierarchies of two-component nonlocal conservation laws
(corresponding to non-negative and non-positive powers of the spectral
parameter). To these hierarchies there correspond two infinite-dimensional
Abelian coverings (in the sense of~\cite{VinKrasTrends}) which we call
\emph{positive} and \emph{negative} and denote by~$\tilde{\mathcal{E}}^+$
and~$\tilde{\mathcal{E}}^-$, respectively, and we describe the algebras of
nonlocal symmetries in these coverings.

The equation itself possesses an infinite-dimensional algebra of local
symmetries parametrized by three arbitrary functions in~$t$ and one in~$y$
plus one standing alone scaling symmetry (which allows to assign weight to the
variables under consideration), see Table~\ref{tab:loc-sym} below. We show
that all these symmetries admit lifts to both positive and negative coverings.
In addition to the extensions of local symmetries, new, purely nonlocal ones
arise in both cases.

For the covering~$\tilde{\mathcal{E}}^+$, the scaling symmetry becomes the
terminal element of the non-positive part of the Witt
algebra~$\mathfrak{W}^-$, while the $y$-dependent symmetries become a part of
the loop algebra~$\mathfrak{L}[y]$ whose coefficients are functions in~$y$; a
natural action of~$\mathfrak{W}^-$ on~$\mathfrak{L}[y]$ is defined. No new
$t$-dependent symmetry arises on~$\tilde{\mathcal{E}}^+$ and the local ones
form a graded ideal in~$\sym(\tilde{\mathcal{E}}^+)$. The exact formulation is
given in Theorem~\ref{sec:lie-algebra-struct-theo1}.

In the case of~$\tilde{\mathcal{E}}^-$ (see
Theorem~\ref{sec:lie-algebra-struct-theo-2}) the scaling symmetry becomes the
first element in the non-negative part of the Witt algebra~$\mathfrak{W}^+$
and the $t$-dependent symmetries become a part of the loop
algebra~$\mathfrak{L}[t]$. The algebra~$\mathfrak{W}^+$ acts
on~$\mathfrak{L}[t]$, while the local $y$-dependent symmetries constitute a
direct summand in~$\sym(\tilde{\mathcal{E}}^-)$.

Finally, we show that the mutually inverse recursion operators found by one of
the authors in~\cite{Morozov2012} act on~$\mathfrak{L}[y]$
and~$\mathfrak{L}[t]$ and accomplish `tunneling' between~$\mathfrak{W}^-$
and~$\mathfrak{W}^+$.

The exposition is organized as follows. In Section~\ref{sec:preliminaries}, we
introduce some basic preparatory definitions and facts needed
below. Section~\ref{sec:equation} describes the 3D rdDym equation: local
symmetries, the Lax pair, and the coverings. The main results are formulated
and proved in Section~\ref{sec:nonlocal-symmetries}.

\section{Preliminaries}
\label{sec:preliminaries}

We expose here (in a simplified, local coordinate form) the basics of the
geometrical approach to differential equations and differential coverings
following~\cite{AMS} and~\cite{VinKrasTrends}.

\subsection{Jets and equations}
\label{sec:jets-equations}

Consider~$\mathbb{R}^n$ with coordinates~$x^1,\dots,x^n$ and~$\mathbb{R}^m$
coordinated by~$u^1,\dots,u^m$. The space of \emph{$k$-jets}~$J^k(n,m)$,
$k=0,1,\dots,\infty$, carries the coordinates~$x^1,\dots,x^n$
and~$u_\sigma^j$, where~$j=1,\dots,m$ and~$\sigma$ is a symmetrical
multi-index of length~$\abs{\sigma}\leq k$,
$u_\varnothing^j=u^j$. If~$u^j=f(x^1,\dots,x^n)$ is a vector-function then the
collection
\begin{equation*}
  u_\sigma^j=\frac{\partial^{\abs{\sigma}}u^j}{\partial x^\sigma},\qquad
  j=1,\dots,m,\quad \abs{\sigma}\leq k,
\end{equation*}
is called its \emph{$k$-jet}.

At a fixed point~$\theta\in J^k(n,m)$ tangent planes to the graphs of $k$-jets
passing through this point span the \emph{Cartan plane}~$\mathcal{C}_\theta$
and the correspondence~$\mathcal{C}\colon \theta\mapsto \mathcal{C}_\theta$ is
called the \emph{Cartan distribution}. For~$k=\infty$, a basis
of~$\mathcal{C}$ consists of the vector fields
\begin{equation*}
  D_{x^i}=\frac{\partial}{\partial x^i} + \sum_{j,\sigma} u_{\sigma
    i}^j\frac{\partial}{\partial u_\sigma^j},\qquad i=1,\dots,n,
\end{equation*}
called the \emph{total derivatives}. The total derivatives commute which
amounts to the formal integrability of the Cartan distribution
on~$J^\infty(n,m)$.

Consider a submanifold in~$J^k(n,m)$ given by the relations
\begin{equation}\label{eq:5}
  F^1(x^i,u_\sigma^j)=\dots=F^r(x^i,u_\sigma^j)=0.
\end{equation}
This is a \emph{differential equation of order~$k$}. Its \emph{infinite
  prolongation}~$\mathcal{E}\subset J^\infty(n,m)$ is given by
\begin{equation*}
  D_\sigma(F^j)=0,\qquad j=1,\dots,r,\quad \abs{\sigma}\geq 0,
\end{equation*}
where~$D_\sigma=D_{x^{i_1}}\circ\dots\circ D_{x^{i_k}}$ for~$\sigma=i_1\dots
i_k$. Everywhere below we deal with infinite prolongations only and identify
them with differential equations.

The total derivatives are restrictable to infinite prolongations and these
restrictions span the Cartan distribution on~$\mathcal{E}$. Maximal integral
manifolds of this distribution are solutions.

\subsection{Symmetries}
\label{sec:symmetries}

Consider an equation~$\mathcal{E}\subset J^\infty(n,m)$. We shall assume below
that the natural
projection~$\mathcal{E}\to J^0(n,m)=\mathbb{R}^n\times\mathbb{R}^m$ is a
surjective map \emph{onto} its target\footnote{This means that the
  differential consequences of~\eqref{eq:5} do not contain $0$-order
  functions.}. Consequently, the algebra~$C^\infty(J^0(n,m))$ is embedded into
the algebra~$C^\infty{}(\mathcal{E})$.

A vector field~$X\colon C^\infty{}(\mathcal{E})\to C^\infty{}(\mathcal{E})$ is
called \emph{vertical} if~$\eval{X}_{C^\infty(J^0(n,m))}=0$, i.e., $X$ does
not contain components of the form~$\partial/\partial x^i$. A vertical
field~$X$ is a (\emph{higher}, or \emph{generalized}) \emph{symmetry}
of~$\mathcal{E}$ if it preserves the Cartan distribution,
i.e.,~$[X,\mathcal{C}]\subset \mathcal{C}$. Symmetries of~$\mathcal{E}$ form a
Lie algebra denoted by~$\sym(\mathcal{E})$.

A vector field is a symmetry if and only if it has the \emph{evolutionary}
form
\begin{equation}\label{eq:26}
  \Ev_\phi = \sum D_\sigma(\phi^j)\frac{\partial}{\partial u_\sigma^j},
\end{equation}
where summation is taken over the \emph{internal} coordinates
on~$\mathcal{E}$. Here~$\phi=(\phi^1,\dots,\phi^m)$ is a vector-function
on~$\mathcal{E}$ called the \emph{generating section} (or
\emph{characteristic}) of the symmetry. It must satisfy the equation
\begin{equation*}
  \ell_{\mathcal{E}}(\phi) = 0,
\end{equation*}
where~$\ell_{\mathcal{E}}$ is the \emph{linearization} of~$\mathcal{E}$
defined as the restriction of the operator
\begin{equation}\label{eq:27}
  \ell_F =\left\Vert \sum_\sigma\frac{\partial F^j}{\partial u_\sigma^l}D_\sigma
  \right\Vert
\end{equation}
to~$\mathcal{E}$. Generating functions form a Lie algebra with respect to the
\emph{Jacobi bracket}
  \begin{equation*}
    \{\phi,\psi\}^j=\sum\left(D_\sigma(\phi^l)\frac{\partial \psi^j}{\partial
        u_\sigma^l} - D_\sigma(\psi^l)\frac{\partial \phi^j}{\partial
        u_\sigma^l}\right),
  \end{equation*}
which in the coordinate-free way can be defined as~$\{\phi,\psi\} =
\Ev_\phi(\psi) - \Ev_\psi(\phi)$.

\subsection{Differential coverings}
\label{sec:diff-cover}

Consider the space~$\tilde{\mathcal{E}}=\mathbb{R}^s\times\mathcal{E}$,
$s\leq\infty$, and the natural
projection~$\tau\colon\tilde{\mathcal{E}}\to\mathcal{E}$. We say that~$\tau$
is an $s$-dimensional (\emph{differential}) \emph{covering} over~$\mathcal{E}$
if~$\tilde{\mathcal{E}}$ is endowed with vector
fields~$\tilde{D}_{x^1},\dots,\tilde{D}_{x^n}$ such that
\begin{equation*}
  [\tilde{D}_{x^i},\tilde{D}_{x^j}] = 0,\quad\tau_*(\tilde{D}_{x^i}) =
  D_{x^i}, \qquad i,j=1,\dots,n.
\end{equation*}
Let~$\{w^\alpha\}$ be coordinates in~$\mathbb{R}^s$ (they are called
\emph{nonlocal variables}). Then the covering structure is given by
\begin{equation*}
  \tilde{D}_{x^i} = D_{x^i} + X_i
\end{equation*}
such that
\begin{equation*}
  D_{x^i}(X_j) - D_{x^j}(X_i) + [X_i,X_j] = 0,
\end{equation*}
where
\begin{equation*}
  X_i = \sum_{\alpha} X_i^\alpha\frac{\partial}{\partial w^\alpha}
\end{equation*}
are $\tau$-vertical vector fields.

There exists a distinguished class of coverings that are associated with
two-component conservation laws of~$\mathcal{E}$. Fix two integers~$i$
and~$j$, $1\leq i<j\leq n$, and consider a differential form
\begin{equation*}
  \omega=X_i\,dx^1\wedge\dots\wedge\widehat{dx^i}\wedge\dots\wedge\,dx^n +
  X_j\,dx^1\wedge\dots\wedge\widehat{dx^j}\wedge\dots\wedge\,dx^n,
\end{equation*}
where `hat' means that the corresponding term is omitted, closed with respect
to the horizontal de~Rham differential, i.e., such that
\begin{equation*}
  D_{x^i}(X_i)=(-1)^{i+j-1}D_{x^j}(X_j).
\end{equation*}
Consider the Euclidean space~$V$ with the coordinates~$w^\sigma$, where~$\sigma$
is symmetric multi-index whose entries are any integers~$1,\dots,n$ except
for~$i$ and~$j$. Thus,~$\dim V=1$ if~$n=2$ and~$\dim V=\infty$ otherwise. Then
the system of vector fields
\begin{align*}
  \tilde{D}_{x^k} &= D_{x^k} +\sum_\sigma w^{\sigma k}\frac{\partial}{\partial
                    w^\sigma},\qquad k\neq i,j,\\
  \tilde{D}_{x^i} &= D_{x^i} + \sum_\sigma
                    \tilde{D}_\sigma(X_j)\frac{\partial}{\partial w^\sigma},\\
  \tilde{D}_{x^j} &= D_{x^j} + (-1)^{i+j-1}\sum_\sigma
                    \tilde{D}_\sigma(X_i)\frac{\partial}{\partial w^\sigma}
\end{align*}
define a covering structure
on~$\tilde{\mathcal{E}}_\omega=V\times\mathcal{E}$.  Coverings of this type
are called Abelian.

\subsection{Nonlocal symmetries}
\label{sec:nonlocal-symmetries-1}

Denote by~$\mathcal{C}$ the distribution on~$\tilde{\mathcal{E}}$ spanned by
the fields~$\tilde{D}_{x^1},\dots,\tilde{D}_{x^n}$ and let~$X$ be a field
vertical with respect to the
composition~$\tilde{\mathcal{E}} \to\mathcal{E}\to\mathbb{R}^n$. Such a field
is called a \emph{nonlocal symmetry} if it
preserves~$\tilde{\mathcal{C}}$. These symmetries form a Lie algebra denoted
by~$\sym_\tau(\mathcal{E})$. The
restriction~$\eval{X}_{C^\infty(\mathcal{E})}\colon C^\infty(\mathcal{E})\to
C^\infty(\tilde{{\mathcal{E}}})$ is called a nonlocal $\tau$-\emph{shadow}. A
nonlocal symmetry is said to be \emph{invisible} if its shadow vanishes.

In local coordinates, any~$X\in\sym_\tau(\mathcal{E})$ is of the form
\begin{equation*}
  X=\tilde{\Ev}_\phi+\sum_\alpha\psi^\alpha\frac{\partial}{\partial w^\alpha},
\end{equation*}
where~$\phi=(\phi^1,\dots,\phi^m)$, $\psi^\alpha$ are functions
on~$\tilde{\mathcal{E}}$ satisfying the equations
\begin{align*}
  &\tilde{\ell}_{\mathcal{E}}(\phi)=0,\\
  &\tilde{D}_{x^i}(\psi^\alpha)=\sum_{j,\sigma}\frac{\partial
    X_i^\alpha}{\partial u_\sigma^j}\tilde{D}_\sigma(\phi^j) +
    \sum_\beta\frac{\partial X_i^\alpha}{\partial w^\beta}\psi^\beta,
\end{align*}
where~$\tilde{\Ev}_\phi$ ans~$\tilde{\ell}_{\mathcal{E}}$ are obtained from
the expressions~\eqref{eq:26} and~\eqref{eq:27}, respectfully, by
changing~$D_{x^i}$ to~$\tilde{D}_{x^i}$. Nonlocal shadows are the
operators~$\tilde{\Ev}_\phi$ while invisible symmetries are obtained from
general ones by setting~$\phi=0$.

In particular, for coverings of the form~$\tilde{\mathcal{E}}_\omega$,
where~$\omega$ is a $2$-component conservation law, the symmetries acquire the
form
\begin{equation*}
  X=\tilde{\Ev}_\phi+\sum_\sigma D_\sigma(\psi)\frac{\partial}{\partial w^\sigma},
\end{equation*}
where~$\phi$ and~$\psi$ satisfy
\begin{align*}
  &\tilde{\ell}_{\mathcal{E}}(\phi)=0,\\
  &\tilde{D}_{x^i}(\psi)=\sum_{\sigma,k}\frac{\partial X_j}{\partial
    u_\sigma^k}\tilde{D}_\sigma(\phi^k) + \sum_\sigma\frac{\partial
    X_j}{\partial w^\sigma}\tilde{D}_\sigma(\psi),\\
  &\tilde{D}_{x^j}(\psi)=  (-1)^{i+j-1}\left(\sum_{\sigma,k}\frac{\partial
    X_i}{\partial
    u_\sigma^k}\tilde{D}_\sigma(\phi^k) + \sum_\sigma\frac{\partial
    X_i}{\partial w^\sigma}\tilde{D}_\sigma(\psi)\right).
\end{align*}

\subsection{B\"{a}cklund transformations and recursion operators}
\label{sec:bacl-transf-recurs}

Let~$\mathcal{E}_1$, $\mathcal{E}_2$ be equations. A \emph{B\"{a}cklund
  transformation} between~$\mathcal{E}_1$ and~$\mathcal{E}_2$ is the diagram
\begin{equation*}
  \xymatrix{&\tilde{\mathcal{E}}\ar[dl]_{\tau_1}\ar[dr]^{\tau_2}&\\
    \mathcal{E}_1&&\mathcal{E}_2\rlap{,}
  }
\end{equation*}
where~$\tau_1$, $\tau_2$ are coverings. When~$\mathcal{E}_1=\mathcal{E}_2$, it
is called a \emph{B\"{a}cklund auto-trans\-for\-ma\-ti\-on}. If~$\tau_1$ is
finite-dimensional and~$\gamma\subset\mathcal{E}_1$ is a graph of solution
then, generically,~$\tau_2\left(\tau_1^{-1}(\gamma)\right)$ is a
finite-dimensional manifold endowed with an integrable $n$-dimensional
distribution whose integral manifolds are solutions of~$\mathcal{E}_2$.

Consider now an equation~$\mathcal{E}$ given by~\eqref{eq:5} and the system
\begin{equation*}
  F(x^i,u_\sigma^j)=0, \qquad \ell_F(x^i,u_\sigma^j,q_\sigma^j)=0,
\end{equation*}
where~$F=(F^1,\dots,F^r)$. This system is called the \emph{tangent} equation
to~$\mathcal{E}$ and de\-no\-ted by~$\mathcal{T}\mathcal{E}$, while the
projection~$\mathrm{t}\colon\mathcal{T}\mathcal{E}\to\mathcal{E}$ is called
the \emph{tangent covering}. Sections of this covering that preserve the
Cartan distribution are identified with generating functions of symmetries.

Let~$\mathcal{R}$ be a B\"{a}cklund transformation
between~$\mathcal{T}\mathcal{E}_1$ and~$\mathcal{T}\mathcal{E}_2$. Then it
follows from the above said that it accomplishes a correspondence between
symmetries of the two equations. If~$\mathcal{E}_1=\mathcal{E}_2$ then such a
correspondence is called a \emph{recursion operator},~\cite{Mar-another}.

\section{The equation}
\label{sec:equation}

The 3D rdDym equation~$\mathcal{E}$ is of the form
\begin{equation}
  \label{eq:1}
  u_{ty} = u_xu_{xy} - u_yu_{xx}.
\end{equation}
For internal coordinates in~$\mathcal{E}$ one can choose the functions
\begin{equation*}
  u_k=u_{\underbrace{x\dots x}_{k \text{times}}},\quad
  u_{k,l}^t=u_{\underbrace{x\dots x}_{k \text{times}}\underbrace{t\dots t}_{l
      \text{times}}},\quad
  u_{k,l}^y=u_{\underbrace{x\dots x}_{k \text{times}}\underbrace{y\dots y}_{l
      \text{times}}},\qquad k\geq0,\ l\geq0.
\end{equation*}
Thus,~$u_0=u$, $u_1=u_x$, $u_{0,1}^y=u_y$, $u_{0,1}^t=u_t$, etc.  The total
derivatives acquire the form
\begin{align*}
  D_x&=\frac{\partial}{\partial x} + \sum_{k}u_{k+1}\frac{\partial}{\partial
       u_k} + \sum_{k,l}\left(
       u_{k+1,l}^y\frac{\partial}{\partial u_{k,l}^y} +
       u_{k+1,l}^t\frac{\partial}{\partial u_{k,l}^t}\right),\\
  D_y&=\frac{\partial}{\partial y} + \sum_ku_{k,1}^y\frac{\partial}{\partial
       u_k} +
       \sum_{k,l}\left(u_{k,l+1}^y\frac{\partial}{\partial u_{k,l}^y} +
       D_x^kD_t^{l-1}(u_xu_{xy} - u_yu_{xx})\frac{\partial}{\partial
       u_{k,l}^t}\right),\\
  D_t&=\frac{\partial}{\partial t} + \sum_ru_{k,1}^t\frac{\partial}{\partial
       u_k} + \sum_{k,l}\left(D_x^kD_y^{l-1}(u_xu_{xy} -
       u_yu_{xx})\frac{\partial}{\partial u_{k,l}^y} +
       u_{k,l+1}^t\frac{\partial}{\partial u_{k,l}^t}\right)
\end{align*}
in these coordinates.

\subsection{Local symmetries}
\label{sec:local-symmetries}

Local symmetries of Equation~\eqref{eq:1} are solutions to the linearized
equation
\begin{equation}
  \ell_{\mathcal{E}}(\phi)\equiv D_tD_y(\phi) - u_xD_xD_y(\phi) +
  u_yD_x^2(\phi) - u_{xy}D_x(\phi) + u_{xx}D_y(\phi) = 0.
\label{linearized_eq}
\end{equation}
The space of solutions is spanned by the functions
\begin{align*}
  \psi_0&=x u_x - 2u,\\
  \upsilon_0(B)&=Bu_y, \\
  \theta_{0}(A)&=Au_t + A'(xu_x-u) + \frac{1}{2}A''x^2\,\\
  \theta_{-1}(A)&=Au_x+A'x,\\
  \theta_{-2}(A)&=A,
\end{align*}
where~$A=A(t)$, $B=B(y)$ and `prime' denotes the $t$-derivative. To any
solution~$\phi$ there corresponds the evolutionary vector field
\begin{equation}\label{eq:7}
  \Ev_\phi=\sum_k D_x^k(\phi)\frac{\partial}{\partial u_k} +
  \sum_{k,l}\left(D_x^kD_y^l(\phi)\frac{\partial}{\partial u_{k,l}^y} +
    D_x^kD_t^l(\phi)\frac{\partial}{\partial u_{k,l}^t}\right).
\end{equation}
on~$\mathcal{E}$.

The Lie algebra structure in the space~$\sym(\mathcal{E})$ is presented in
Table~\ref{tab:loc-sym}.

\begin{table}[bct]
  \centering
  \begin{tabular}{|c|c|c|c|c|c|}
    \hline
    \rule{0em}{12pt}&$\psi_0$&$\upsilon_0(\bar{B})$&$\theta_{0}(\bar{A})$&$\theta_{-1}(\bar{A})$
    &$\theta_{-2}(\bar{A})$\\
    \hline
    \rule{0em}{12pt}$\psi_0$&$0$&$0$&$0$&
                          $-\theta_{-1}(\bar{A})$&$-2\theta_{-2}(\bar{A})$\\
\rule{0em}{12pt}$\upsilon_0(B)$&\dots&$\upsilon_0(B\,\bar{B}' - \bar{B}\,B') $&0&$0$&$0$\\

    \rule{0em}{12pt}$\theta_{0}(A)$&\dots&\dots&
                           $\theta_{0}(\bar{A}A'-A\bar{A}')$&$\theta_{-1}(\bar{A}A' -  A\bar{A}')$
    &$\theta_{-2}(\bar{A}A'-A\bar{A}')$\\

    \rule{0em}{12pt}$\theta_{-1}(A)$&\dots&\dots&\dots&$\theta_{-2}(\bar{A}A'-A\bar{A}')$&$0$\\
    \rule{0em}{12pt}$\theta_{-2}(A)$&\dots&\dots&\dots&\dots&$0$\\
    \hline
  \end{tabular}\\ \ \\
  \caption{The Lie algebra structure of~$\sym(\mathcal{E})$}
  \label{tab:loc-sym}
\end{table}

\subsection{Coverings}
\label{sec:coverings}

The 3D rdDym equation~\eqref{eq:1} possesses the linear Lax representation
\begin{equation}
  \label{eq:2}
  \begin{array}{lcl}
    w_t&=&(u_x-\lambda)w_x\\
    w_y&=&\lambda^{-1}u_yw_x,
  \end{array}
\end{equation}
where~$\lambda\neq 0$ is a non-removable parameter.  Expanding~$w$ in formal
series in~$\lambda$
\begin{equation*}
  w=\sum_{i=-\infty}^{+\infty}w_i\lambda^i.
\end{equation*}
yields, cf. \cite{Pavlov2003},
\begin{equation}
  \label{eq:3}
  \begin{array}{lcl}
    w_{i,t}&=&u_xw_{i,x} - w_{i-1,x},\\
    w_{i,y}&=&u_yw_{i+1,x}.
  \end{array}
\end{equation}
This system is infinite in both directions and thus the nonlocal
quantities~$w_i$ are not defined in a proper way.  To improve the setting,
consider two reductions of~\eqref{eq:3}: (a)~$w_i=0$ for~$i<0$ and (b)~$w_i=0$
for~$i>0$. Two hierarchies of nonlocal two-component conservation laws arise
in such a way, \cite{Pavlov2003}.  They will be called the \emph{positive} and
the \emph{negative} ones, respectively.  Our aim is to describe nonlocal
symmetries of the corresponding Abelian coverings.

Note that the positive hierarchy corresponds to the Taylor expansion of~$w$,
while the negative one is related to the Laurent expansion.

\subsubsection{The positive hierarchy}
\label{sec:positive-hierarchy}

Assume~$w_i=0$ for~$i<0$ and rewrite~\eqref{eq:3} in the form
\begin{align*}
  w_{i,t}&=\frac{u_x}{u_y}w_{i-1,y} - w_{i-1,x},\\
  w_{i,x}&=\frac{w_{i-1,y}}{u_y}.
\end{align*}
Then, due to the assumption,~$w_{0,t}=w_{0,x}=0$, or~$w_0=G(y)$ and the
defining equations of the covering are
\begin{equation*}
  \label{eq:4}
  \begin{array}{lcl}
    w_{1,t}&=&\dfrac{u_x}{u_y}G',\\[3mm]
    w_{1,x}&=&\dfrac{G'}{u_y};
  \end{array}\qquad
  \begin{array}{lcl}
    w_{i,t}&=&\dfrac{u_x}{u_y}w_{i-1,y} - w_{i-1,x},\\[3mm]
    w_{i,x}&=&\dfrac{w_{i-1,y}}{u_y},
  \end{array}
\end{equation*}
where~$i>0$ and `prime' denotes the $y$-derivative.

Without loss of generality we may assume~$G'\neq 0$ and make the change of
variables~$y\mapsto G(y)$. This transformation preserves our equation (due to
the symmetry~$\upsilon_0(B)$). Denoting the resulting nonlocal variables
by~$q_i$, $i>0$, we arrive to the covering defined by
\begin{align}
  \label{eq:5a}
  &\begin{array}{lcl}
    q_{1,t}&=&\dfrac{u_x}{u_y},\\[3mm]
    q_{1,x}&=&\dfrac{1}{u_y},
   \end{array}\\
  \intertext{and}
  \label{eq:5b}
  &\begin{array}{lcl}
    q_{i,t}&=&\dfrac{u_x}{u_y}q_{i-1,y} - q_{i-1,x},\\[3mm]
    q_{i,x}&=&\dfrac{q_{i-1,y}}{u_y}.
  \end{array}
\end{align}
Note that the quantities~$q_i$ do not form a complete set of nonlocal
variables in the covering under consideration. To have a complete collection,
let us introduce the functions~$q_i^{(j)}$ such that
\begin{equation*}
  q_i^{(0)}=q_i,\qquad q_i^{(j+1)} = \left(q_i^{(j)}\right)_y.
\end{equation*}
Then the total derivatives on the space~$\tilde{\mathcal{E}}^+$ of the covering
are given by
\begin{align*}
  \tilde{D}_x& = D_x +
               \sum_{j=0}^\infty
               \tilde{D}_y^j\left(\frac{1}{u_y}\right)\frac{\partial}{\partial
               q_1^{(j)}} + \sum_{i=2}^\infty\sum_{j=0}^\infty
               \tilde{D}_y^j
               \left(\frac{q_{i-1}^{(1)}}{u_y}\right)\frac{\partial}{\partial
               q_i^{(j)}},\\
  \tilde{D}_y&=D_y + \sum_{i=1}^\infty\sum_{j=0}^\infty
               q_i^{(j+1)}\frac{\partial}{\partial q_i^{(j)}},\\
  \tilde{D}_t&=D_t + \sum_{j=0}^\infty\tilde{D}_y^j\left(\frac{u_x}{u_y}\right)
               \frac{\partial}{\partial q_1^{(j)}} + \sum_{i=2}^\infty
               \sum_{j=0}^\infty
               \tilde{D}_y^j\left(\frac{u_x}{u_y}q_{i-1}^{(1)} -
               \tilde{D}_x\left( q_{i-1}^{(0)}\right)
               \right)\frac{\partial}{\partial q_i^{(j)}},
\end{align*}
where~$D_x$, $D_y$, $D_t$ are the total derivatives on~$\mathcal{E}$ given
above.

\subsubsection{The negative hierarchy}
\label{sec:negative-hierarchy}

We have~$w_i=0$ for~$i>0$ now. Then from~\eqref{eq:3} it follows that
\begin{equation*}
  \begin{array}{lcl}
    w_{0,x}&=&0,\\
    w_{0,y}&=&0;
  \end{array}\quad
  \begin{array}{lcl}
    w_{-1,x}&=&u_xw_{0,x,}-w_{0,t},\\
    w_{-1,y}&=&u_yw_{0,x};
  \end{array}\
  \begin{array}{lcl}
    w_{-2,x}&=&u_xw_{-1,x}-w_{-1,t}\\
    w_{-2,y}&=&u_yw_{-1,x}.
  \end{array}
\end{equation*}
Consequently,
\begin{equation*}
  w_0=\tilde{F}(t),\qquad w_{-1}=-x\tilde{F}'+G(t),\qquad w_{-2}=-\tilde{F}'u
  +\frac{1}{2}x^2\tilde{F}'' - G'x
  +H(t).
\end{equation*}
Without loss of generality we can assume~$G=H=0$. Then, after
relabeling~$r_i=w_{-i-2}$, $i=1$, $2$, \dots, we get the following defining
equations for the negative hierarchy:
\begin{equation}
  \label{eq:6}
  \begin{array}{l}
    r_{1,x}=F(u_t-u_x^2) +F'(u + xu_x) - \dfrac{1}{2}x^2F'' ,\\[2mm]
    r_{1,y}=u_y(xF'-Fu_x);
  \end{array}\quad
  \begin{array}{l}
    r_{i,x}=u_xr_{i-1,x}- r_{i-1,t},\\[2mm]
    r_{i,y}=u_yr_{i-1,x}.
  \end{array}
\end{equation}
for~$i>1$, where~$F=\tilde{F}'$. The defining equations can be simplified:

\begin{proposition}
  \label{sec:negative-hierarchy-gauge}
  There exists a gauge transformation of the space~$\tilde{\mathcal{E}}^-$
  that \textup{`}kills\textup{'} the function~$F$\textup{,} i.e.\textup{,}
  transforms~\eqref{eq:6} to
  \begin{equation}
    \label{eq:8}
    \begin{array}{l}
    r_{1,x}=u_x^2-u_t,\\[2mm]
    r_{1,y}=u_xu_y;
  \end{array}
  \quad
  \begin{array}{l}
    r_{i,x}=u_xr_{i-1,x}- r_{i-1,t},\\[2mm]
    r_{i,y}=u_yr_{i-1,x}.
  \end{array}
  \end{equation}
\end{proposition}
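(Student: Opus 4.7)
The plan is to exhibit an explicit gauge transformation. The conceptual observation behind the construction is that both~\eqref{eq:6} and~\eqref{eq:8} arise from the negative Laurent expansion of a solution of the Lax pair~\eqref{eq:2}, the only difference being the choice of the free leading coefficient $w_0=\tilde F(t)$ (arbitrary in~\eqref{eq:6}, normalized to $\tilde F(t)=-t$, i.e.\ $F=-1$ and $F'=F''=0$, in~\eqref{eq:8}). Thus the gauge freedom to be exploited is precisely the freedom in choosing $\tilde F$.

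First I would treat $r_1$ directly by the ansatz
\begin{equation*}
  r_1 \;=\; -F\,\tilde r_1 \;+\; xF'u \;-\; \tfrac{1}{6}x^3 F''.
\end{equation*}
Differentiating with respect to $x$ and $y$ and substituting $\tilde r_{1,x}=u_x^2-u_t$, $\tilde r_{1,y}=u_xu_y$ from~\eqref{eq:8}, while using that $F$ depends only on $t$ (so $D_xF=D_yF=0$), reproduces~\eqref{eq:6} term by term: the $F(u_t-u_x^2)$ contribution comes from $-F\tilde r_{1,x}$, the $F'(u+xu_x)$ contribution from $D_x(xF'u)$, and the correction $-\tfrac{1}{2}x^2F''$ from $D_x(-\tfrac{1}{6}x^3F'')$; the $y$-equation is verified in exactly the same way.

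For $i\ge 2$ I would proceed by induction, exploiting that the recursions $r_{i,x}=u_xr_{i-1,x}-r_{i-1,t}$ and $r_{i,y}=u_yr_{i-1,x}$ have the same form in both~\eqref{eq:6} and~\eqref{eq:8}. Once $\tilde r_1,\dots,\tilde r_{i-1}$ have been written in terms of $r_1,\dots,r_{i-1}$ and local jets of $u$, combining these recursions with the Lax identity $\tilde r_{j,t}=u_x\tilde r_{j,x}-\tilde r_{j+1,x}$ isolates $\tilde r_i$ linearly inside $r_{i,x}$ and yields an ansatz of the shape $r_i=-F\tilde r_i+(\text{polynomial in lower }\tilde r_j\text{'s with coefficients in }x\text{ and derivatives of }F)+(\text{local correction})$. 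Explicitly at the next level,
\begin{equation*}
  r_2 \;=\; -F\,\tilde r_2 \;+\; xF'\,\tilde r_1 \;+\; \tfrac{1}{2}F'u^2 \;-\; \tfrac{1}{2}x^2F''u \;+\; \tfrac{1}{24}x^4F''',
\end{equation*}
which I would verify by direct substitution into $r_{2,x}=u_xr_{1,x}-r_{1,t}$ and $r_{2,y}=u_yr_{1,x}$.

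The hardest part will be the inductive step: one must show that the local correction needed so that $r_i$ simultaneously satisfies the $x$- and $y$-equation in~\eqref{eq:6} does exist — equivalently, that a certain local $1$-form $P\,dx+Q\,dy$ assembled from $F$, its derivatives, and the jets of $u$ is closed (hence locally exact). This closedness is inherited from $D_y r_{i,x}=D_x r_{i,y}$, which holds because~\eqref{eq:6} already defines a genuine Abelian covering, so the obstruction is purely bookkeeping. A more conceptual route that bypasses the induction is to let $\hat w$ denote the Lax solution of~\eqref{eq:2} with $\hat w_0=-t$; by linearity $\phi=w-\hat w$ is itself a Lax solution, its coefficients $\phi_{-i}$ are governed by the same Lax recursion starting from $\phi_0=\tilde F(t)+t$, and the identity $r_i=\tilde r_i+\phi_{-i-2}$ then furnishes the gauge transformation at all levels of the hierarchy at once.
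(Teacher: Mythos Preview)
Your approach is correct and follows the same overall strategy as the paper: an explicit gauge substitution for $r_1$ (your formula, incidentally, has the right signs where the paper's displayed formula~\eqref{eq:24} contains a typo), followed by induction on~$i$. The difference is in the packaging of the inductive step. The paper introduces the operator
\[
  \mathcal{Y}_- = -x\frac{\partial}{\partial t} + 2u\frac{\partial}{\partial x}
  - 3\bar{r}_1\frac{\partial}{\partial u} + \sum_{i\geq 1}
  (i+3)\bar{r}_{i+1}\frac{\partial}{\partial \bar{r}_i}
\]
and sets $r_k = \frac{1}{k+2}\,\mathcal{Y}_-(r_{k-1})$; this manufactures the correction terms automatically and reduces the induction to a one-line verification. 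Your route produces the same objects but requires guessing the ansatz at each level and then checking both defining equations by hand, which is exactly the ``bookkeeping'' you flag as the hardest part.

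Your ``conceptual route'' via linearity of the Lax pair is a pleasant heuristic, but as written it has a gap. The identity $r_i=\tilde{r}_i+\phi_{-i-2}$ is tautological: the coefficients $\phi_{-i-2}$ of the difference solution are themselves genuinely nonlocal (they are the $r$-variables of the hierarchy with parameter $F+1$, and already $\phi_{-3}$ requires integrating $(F+1)(u_t-u_x^2)$). So the formula does not, by itself, exhibit $\tilde{r}_i$ as a function of the \emph{existing} covering coordinates $r_1,\dots,r_i$ and local jets. To close the argument you would still have to prove inductively that each $\phi_{-i-2}$ lies in the subalgebra generated by $r_1,\dots,r_{i-1}$ (equivalently $\tilde r_1,\dots,\tilde r_{i-1}$) and local quantities---and at that point you are back to the first, inductive, construction anyway.
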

\begin{proof}
  Define the new nonlocal variable~$\bar{r}_1$ by
  \begin{equation}
    \label{eq:24}
    r_1=-F\bar{r}_1 - F'xu +\frac{1}{6}F''x^3.
  \end{equation}
  Substituting~\eqref{eq:24} to the left equations in~\eqref{eq:6}, we
  immediately see that
  \begin{equation*}
    \bar{r}_{1,x}=u_x^2-u_t,\qquad    \bar{r}_{1,y}=u_xu_y.
  \end{equation*}
  Let us now introduce the operator
  \begin{equation*}
    \mathcal{Y}_- = -x\frac{\partial}{\partial t} + 2u\frac{\partial}{\partial
      x}
    - 3\bar{r}_1\frac{\partial}{\partial u} + \sum_{i\geq 1}
    (i+3)\bar{r}_{i+1}\frac{\partial}{\partial \bar{r}_i}
  \end{equation*}
  and set by induction
  \begin{equation}\label{eq:25}
    r_{k}=\frac{1}{k+2}\mathcal{Y}_-(r_{k-1})
  \end{equation}
  for~$k\geq2$. Obviously,
  \begin{equation*}
    r_k=F\bar{r}_k + o(k-1),
  \end{equation*}
  where~$o(k-1)$ denotes the terms that depend
  on~$\bar{r}_1,\dots,\bar{r}_{k-1}$ only.

  Assume now that~$k>1$ and the statement is valid for the defining equations
  on~$\bar{r}_1,\dots,\bar{r}_{k-1}$. Then, substituting~\eqref{eq:25} to the
  equations on~$r_k$, we see that it transforms to
  \begin{equation*}
    F\bar{r}_{k,x}=F(u_x\bar{r}_{k-1,x}- \bar{r}_{k-1,t}),\qquad
    F\bar{r}_{k,y}=Fu_y\bar{r}_{k-1,x}
  \end{equation*}
  by the induction assumption.
\end{proof}

We forget about the `old'~$r_k$s and change the notation from~$\bar{r}_k$
to~$r_k$.

A complete set of nonlocal variables consists of the quantities~$r_i^{(j)}$
defined by
\begin{equation*}
  r_i^{(0)}=r_i,\qquad r_i^{(j+1)}=\left(r_i^{(j)}\right)_t.
\end{equation*}
The total derivatives on the covering space~$\tilde{\mathcal{E}}^-$ are of the
form
\begin{align*}
  \tilde{D}_x&=D_x + \sum_{j=0}^\infty \tilde{D}_t^j(u_x^2 - u_t)
               \frac{\partial}{\partial
               r_1^{(j)}} + \sum_{i=2}^\infty
               \sum_{j=0}^\infty \tilde{D}_t^j(u_xr_{i-1,x}- r_{i-1,t})
                              \frac{\partial}{\partial r_i^{(j)}},\\
  \tilde{D}_y&=D_y + \sum_{j=0}^\infty \tilde{D}_t^j(u_xu_y)
               \frac{\partial}{\partial r_1^{(j)}} + \sum_{i=2}^\infty
               \sum_{j=0}^\infty \tilde{D}_t^j(u_yr_{i-1,x})
               \frac{\partial}{\partial r_i^{(j)}},\\
  \tilde{D}_t&=D_t + \sum_{i=1}^\infty \sum_{j=0}^\infty r_i^{(j+1)}
               \frac{\partial}{\partial r_i^{(j)}}
\end{align*}
in these coordinates.

\subsection{Weights}
\label{sec:weights}

Let us assign the following weights
\begin{equation*}
  \abs{x} = 1,\quad \abs{u} = 2,\quad \abs{y} = \abs{t} = 0
\end{equation*}
to the dependent and independent variables. Then
\begin{equation*}
  \abs{u_k} = \abs{u_{k,l}^y} = \abs{u_{k,l}^t} = 2 - k
\end{equation*}
and to any monomial in jet variables we assign the summarized weight of its
factors.

We say that a vector field~$X$ is \emph{homogeneous} if
\begin{equation*}
  \abs{X(f)} = \abs{X} + \abs{f}
\end{equation*}
for any homogeneous function~$f$, where the integer~$\abs{X}$ depends on~$X$
only and is the weight of~$X$. All local symmetries are homogeneous in this
sense and their weight are presented in Table~\ref{tab:distr}.
\begin{table}[bct]
  \centering
  \begin{tabular}{c|ccc}
    \hline
    Weight:&$-2$&$-1$&$0$\\
    \hline
           &&&$\psi_0$\\
           &$\theta_{-2}(A)$&$\theta_{-1}(A)$&$\theta_{0}(A)$\\
           &&&$\upsilon_0(B)$\\
    \hline
  \end{tabular}\\ \ \\
  \caption{Distribution of local symmetries along weights}
  \label{tab:distr}
\end{table}
Obviously,
\begin{equation*}
  \abs{[X,Y]} = \abs{X} + \abs{Y}
\end{equation*}
for any homogeneous~$X$ and~$Y$.

From Equations~\eqref{eq:5a}, \eqref{eq:5b} and~\eqref{eq:6} we immediately
deduce the weights
\begin{equation*}
  \abs{q_i} = -i,\qquad \abs{r_i} = i + 2,\qquad i = 1, 2, \dots,
\end{equation*}
of the nonlocal variables of nonlocal variables in~$\tilde{\mathcal{E}}^+$
and~$\tilde{\mathcal{E}}^-$.

\section{Symmetries}
\label{sec:nonlocal-symmetries}

We describe here the Lie algebras~$\sym\tilde{\mathcal{E}}^+$
and~$\sym\tilde{\mathcal{E}}^-$.

\subsection{Symmetries in the positive hierarchy}
\label{sec:nonl-symmtr-posit}

Any symmetry of~$\tilde{\mathcal{E}}^+$ is a vector field
\begin{equation}\label{eq:28}
  \Ex_\Phi= \tilde{\Ev}_\phi + \sum_{i=1}^\infty \left(
    \phi_i\frac{\partial}{\partial q_i} + \sum_{j=1}^\infty
    \tilde{D}_y^j(\phi_i) \frac{\partial}{\partial q_i^{(j)}}\right),
\end{equation}
where~$\tilde{\Ev}_\phi$ is given by~\eqref{eq:7} with the total
derivatives~$\tilde{D}_\bullet$ instead of~$D_\bullet$ and the collection of
functions
\begin{equation*}
  \Phi = \langle \phi_0=\phi,\phi_1,\dots,
  \phi_i,\dots\rangle,
  \qquad \phi_0,\phi_i\in
  C^\infty(\tilde{\mathcal{E}}^+),
\end{equation*}
satisfies the equations
\begin{align}\label{eq:9}
  &\tilde{\ell}_{\mathcal{E}}(\phi) \equiv \tilde{D}_t\tilde{D}_y(\phi) -
    u_x\tilde{D}_x\tilde{D}_y(\phi) +
    u_y\tilde{D}_x^2(\phi) -
    u_{xy}\tilde{D}_x(\phi) +
    u_{xx}\tilde{D}_y(\phi) =
    0,\\\label{eq:10}
  &\tilde{D}_y(\phi)q_{1,t} + u_y\tilde{D}_t(\phi_1) =
    \tilde{D}_x(\phi),\\\label{eq:11}
  &\tilde{D}_y(\phi)q_{1,x} + u_y\tilde{D}_x(\phi_1)=0,\\\label{eq:12}
  &\tilde{D}_y(\phi)(q_{i,t} + q_{i-1,x}) + u_y(\tilde{D}_t(\phi_i) +
    \tilde{D}_x(\phi_{i-1})) = \tilde{D}_x(\phi)q_{i-1,y} +
    u_x\tilde{D}_x(\phi_{i-1}),\\ \label{eq:13}
  &\tilde{D}_y(\phi)q_{i,x} + u_y\tilde{D}_x(\phi_i) = \tilde{D}_y(\phi_{i-1}),
\end{align}
$i>1$. For any two symmetries~$\Phi$ and~$\Psi$ their \emph{Jacobi
  bracket}~$\{\Phi,\Psi\}$ is defined by
\begin{equation*}
  \Ex_{\{\Phi,\Psi\}} = [\Ex_\Phi,\Ex_\Psi].
\end{equation*}

\subsubsection{Lifts of local symmetries and hierarchies of nonlocal ones}
\label{sec:lifts-local-symm}

We begin with the following statement:
\begin{proposition}\label{sec:lifts-local-symm-prop-2}
  The local symmetries~$\psi_0$\textup{,} $\theta_{-2}(A)$\textup{,}
  $\theta_{-1}(A)$\textup{,} and~$\theta_0(A)$ can be lifted
  to~$\tilde{\mathcal{E}}^+$.
\end{proposition}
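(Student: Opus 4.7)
The strategy is to exhibit, for each of the four local symmetries, an explicit sequence $\langle \phi_1, \phi_2, \ldots\rangle$ of nonlocal components and verify directly that the collection $\Phi = \langle \phi, \phi_1, \ldots\rangle$ satisfies the overdetermined system~\eqref{eq:10}--\eqref{eq:13}. The easiest case is $\phi = \theta_{-2}(A) = A(t)$: both $\tilde{D}_x(A)$ and $\tilde{D}_y(A)$ vanish on~$\tilde{\mathcal{E}}^+$, so the choice $\phi_i \equiv 0$ solves all four equations trivially.

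For $\phi = \theta_{-1}(A) = Au_x + A'x$ I would propose the Ansatz $\phi_i = A\,q_{i,x}$, i.e.\ $\phi_1 = A/u_y$ and $\phi_i = A\,q_{i-1}^{(1)}/u_y$ for $i \geq 2$. Equations~\eqref{eq:11} and~\eqref{eq:13} then reduce, via the product rule for $\tilde{D}_x$, to the identity $u_y\,q_{i,x} = q_{i-1,y}$ built into the covering, while~\eqref{eq:10} and~\eqref{eq:12} reduce after rearrangement to the rdDym equation~\eqref{eq:1}. The scaling case $\phi = \psi_0 = xu_x - 2u$ is handled analogously: the weights $\abs{\psi_0} = 0$, $\abs{q_i} = -i$ together with the defect appearing in~\eqref{eq:10}--\eqref{eq:11} at $i=1$ dictate the Ansatz $\phi_i = i\,q_i + x\,q_{i,x}$, and the same two identities propagate the verification inductively.

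The most involved case is $\phi = \theta_0(A) = A\,u_t + A'(xu_x - u) + \frac{1}{2}A''\,x^2$. At $i = 1$ the Ansatz $\phi_1 = A\,q_{1,t} + A'\,x\,q_{1,x} = (Au_x + A'x)/u_y$ solves~\eqref{eq:10}--\eqref{eq:11} by a computation parallel to the $\theta_{-1}$ case, once the equations are split according to the powers $A$, $A'$, $A''$, $A'''$ of the $t$-derivatives and each piece is reduced using~\eqref{eq:1}. For $i \geq 2$ the naive Ansatz $\phi_i = A\,q_{i,t} + A'\,x\,q_{i,x}$ already produces a nonzero defect in~\eqref{eq:12} (coming from $\tilde{D}_t(q_{i-1,y}) \neq \tilde{D}_t(q_{i-1,x})$), so correction terms linear in the lower nonlocal variables $q_1,\ldots,q_{i-1}$ and their derivatives, with coefficients in $A'$, $A''$, must be added; they are determined recursively from the triangular system obtained after the same $A^{(k)}$-splitting.

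The main obstacle is the bookkeeping for the $\theta_0$ lift: since~\eqref{eq:10}--\eqref{eq:13} is overdetermined, one must check that the corrections forced by~\eqref{eq:10}--\eqref{eq:11} at each step are also consistent with~\eqref{eq:12}--\eqref{eq:13}. I expect this to reduce, by induction on $i$, to repeated use of~\eqref{eq:1} together with the covering relations~\eqref{eq:5a}--\eqref{eq:5b}, with no new compatibility condition arising --- this reflects the fact that the $\lambda$-expansion~\eqref{eq:3} underlying the covering is itself consistent with every local symmetry of~$\mathcal{E}$.
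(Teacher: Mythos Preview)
Your approach coincides with the paper's: propose explicit nonlocal components and verify \eqref{eq:10}--\eqref{eq:13} directly. Your formulas for $\Theta_{-2}(A)$, $\Theta_{-1}(A)$, and $\Psi_0$ are exactly those of the paper.

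For $\theta_0(A)$, however, you overcomplicate matters. Your ``naive'' Ansatz $\phi_i = A\,q_{i,t} + A'\,x\,q_{i,x}$ is already the correct lift for all $i\geq 1$, with no correction terms needed. Using the covering relation $q_{i,t} = u_x q_{i,x} - q_{i-1,x}$ (which follows from \eqref{eq:5b} via $q_{i,x} = q_{i-1,y}/u_y$), your Ansatz rewrites as
\[
  A\,q_{i,t} + A'\,x\,q_{i,x} = (Au_x + A'x)\,q_{i,x} - A\,q_{i-1,x}
  = \theta_{-1}(A)\,q_{i,x} - A\,q_{i-1,x},
\]
and this is precisely the paper's closed formula $\theta_0^i(A) = \theta_{-1}(A)\,q_{i,x} - A\,q_{i-1,x}$ for $i>1$ (with $\theta_0^1(A) = \theta_{-1}(A)\,q_{1,x}$). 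The defect you anticipate in \eqref{eq:12} vanishes identically once the covering relations are applied, so the recursive determination of correction terms and the inductive compatibility argument you outline are unnecessary. The ``main obstacle'' you flag does not exist; the verification for $\theta_0(A)$ is no harder than for $\theta_{-1}(A)$.
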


\begin{proof}
  Let us denote the desired lifts by
  \begin{align*}
    \Psi_0&=\langle \psi_0,\psi_0^1,\dots,\psi_0^i,\dots\rangle,\\
    \Theta_{-2}(A)&=\langle\theta_{-2}(A),\theta_{-2}^1(A), \dots,
                    \theta_{-2}^i(A),\dots\rangle,\\
    \Theta_{-1}(A)&=\langle\theta_{-1}(A),\theta_{-1}^1(A), \dots,
                    \theta_{-1}^i(A),\dots\rangle,\\
    \Theta_0(A)&=\langle\theta_0(A),\theta_0^1(A), \dots,
                    \theta_0^i(A),\dots\rangle
  \end{align*}
  and set
  \begin{align*}
    \psi_0^i &= iq_i +x q_{i,x},&& i\geq1,\\
    \theta_{-2}^i(A)&=0,&& i\geq1,\\
    \theta_{-1}^i(A)&=Aq_{i,x},&& i\geq1,\\
    \theta_0^1(A)&=\theta_{-1}(A)q_{1,x},\quad
                   \theta_0^i(A)=\theta_{-1}(A)q_{i,x} -Aq_{i-1,x},&& i>1.
  \end{align*}
  To establish that the above introduced functions are symmetries, we
  straightforwardly check that they satisfy
  Equations~\eqref{eq:10}--\eqref{eq:13}. For example, let us prove
  that~$\Psi_0$ is a symmetry.

  For Equation~\eqref{eq:10} one has\footnote{Here and below,
    the boxed terms cancel each other.}
  \begin{multline*}
    D_y(xu_x-2u)q_{1,x}+u_yD_x(q_1+xq_{1,x}) = (xu_{xy}\boxed{-2u_y})q_{1,x}
    +
    u_y(\boxed{2q_{1,x}}+xq_{1,xx}) \\
    =xu_{xy}q_{1,x} + u_yx\left(\frac{1}{u_y}\right)_x =
    xu_{xy}\frac{1}{u_y} -xu_y\frac{u_{xy}}{u_y^2}=0.
  \end{multline*}
  Now, Equation~\eqref{eq:11} reads
  \begin{multline*}
    D_y(xu_x-2u)q_{1,t} + u_yD_t(q_1+xq_{1,x}) - D_x(xu_x-2u)=\\
    (xu_{xy}\boxed{-2u_y})q_{1,t} +u_y(\boxed{q_{1,t}}+xq_{1,xt})+u_x-xu_{xx}
    = (xu_{xy} -
    u_y)\frac{u_x}{u_y} +xu_y\left(\frac{u_x}{u_y}\right)_x\\
    + u_x -xu_{xx}=(xu_{xy} - u_y)\frac{u_x}{u_y}
    +xu_y\frac{u_{xx}u_y-u_xu_{xy}}{u_y^2} + u_x=0.
  \end{multline*}
  Equation~\eqref{eq:12} acquires the form
  \begin{multline*}
    D_y(xu_x-2u)q_{i,x} + u_yD_x(iq_i+xq_{i,x}) -
    D_y((i-1)q_{i-1}+xq_{i-1,x})=\\
    (xu_{xy}\boxed{-2u_y})q_{i,x} + u_y(\boxed{(i+1)q_{i,x}}+xq_{i,xx}) -
    (i-1)q_{i-1,y}  -xq_{i-1,xy} =\\
    (xu_{xy}+\boxed{(i-1)u_y})\frac{q_{i-1,y}}{u_y} +
    u_yx\left(\frac{q_{i-1,y}}{u_y}\right)_x \boxed{-(i-1)q_{i-1,y}}=
    -xq_{i-1,xy}=\\
    xu_{xy}\frac{q_{i-1,y}}{u_y} +
    u_yx\frac{q_{i-1,xy}u_y-q_{i-1,y}u_{xy}}{u_y^2} -xq_{i-1,xy} = 0.
  \end{multline*}
  Finally, for Equation~\eqref{eq:13} one has
  \begin{multline*}
    D_y(xu_x-2u)\frac{u_x}{u_y}q_{i-1,y} + u_y(D_t(iq_i+xq_{i,x}) +
    D_x((i-1)q_{i-1}+xq_{i-1,x})) - \\D_x(xu_x-2u)q_{i-1,y} -
    u_xD_y((i-1)q_{i-1}+xq_{i-1,x}) =\\
    (xu_{xy}-2u_y)\frac{u_x}{u_y}q_{i-1,y} +
    (\boxed{iq_{i,t}}+xq_{i,xt}+\boxed{iq_{i-1,x}}+xq_{i-1xx}) +\\
    (u_x-xu_{xx})q_{i-1,y} - u_x((i-1)q_{i-1,y}+xq_{i-1,xy})=\\
    (xu_{xy}\boxed{-2u_y})\frac{u_x}{u_y}q_{i-1,y}
    +u_y\left(\boxed{i\frac{u_x}{u_y}q_{i-1,y}}
      +xq_{i,xt}+xq_{i-1,xx}\right)+\\
    (\boxed{u_x}-xu_{xx})q_{i-1,y} -
    u_x(\boxed{(i-1)q_{i-1,y}}+xq_{i-1,xy})=\\
    xu_{xy}\frac{u_x}{u_y}q_{i-1,y} +xu_y(\boxed{q_{i,t}+q_{i-1,x}})_x
    -xu_{xx}q_{i-1,y} - u_xxq_{i-1,xy}=\\
    xu_{xy}\frac{u_x}{u_y}q_{i-1,y}
    +xu_y\left(\frac{u_x}{u_y}q_{i-1,y}\right)_x -xu_{xx}q_{i-1,y} -
    u_xxq_{i-1,xy}=\\
    xu_{xy}\frac{u_x}{u_y}q_{i-1,y} +
    xu_y\left(\left(\frac{u_x}{u_y}\right)_xq_{i-1,y}
      +\boxed{\frac{u_x}{u_y}q_{i-1,xy}}\right) -xu_{xx}q_{i-1,y}
    \boxed{-u_xxq_{i,xy}}=\\
    xu_{xy}\frac{u_x}{u_y}q_{i-1,y}
    +xu_y\frac{u_{xx}u_y-u_xu_{xy}}{u_y^2}q_{i-1,y} -xu_{xx}q_{i-1,y},
  \end{multline*}
  and this finishes the proof.

  For other symmetries the proofs are similar.
\end{proof}

We shall now need a description of invisible symmetries
in~$\tilde{\mathcal{E}}^+$. We say that~$\Phi$ is an invisible symmetry of
\emph{depth~$k$} if its first~$k$ components vanish, i.e.,
\begin{equation*}
  \Phi=\langle \underbrace{0,\dots,0}_{k \text{ times}},\phi_1^{\mathrm{inv}},
  \dots, \phi_i^{\mathrm{inv}},\dots\rangle
\end{equation*}
The defining equations for invisible symmetries are
\begin{align*}
  &\tilde{D}_x(\phi_1^{\mathrm{inv}})=0,
  && \tilde{D}_t(\phi_1^{\mathrm{inv}})=0;\\
  &u_y\tilde{D}_x(\phi_i^{\mathrm{inv}}) =
    \tilde{D}_y(\phi_{i-1}^{\mathrm{inv}}),
  &&u_y(\tilde{D}_t(\phi_i^{\mathrm{inv}}) +
     \tilde{D}_x(\phi_{i-1}^{\mathrm{inv}})) =
     u_x\tilde{D}_x(\phi_{i-1}^{\mathrm{inv}}),\quad i>1.
\end{align*}
Then~$\phi_1^{\mathrm{inv}}=B(y)$ and any homogeneous symmetry of depth~$k$ is
completely determined by the function~$B$. Denote such a symmetry
by~$\Upsilon_k(B)$. One has
\begin{equation*}
  \abs{\Upsilon_k(B)} = k.
\end{equation*}

\begin{proposition}
  \label{sec:nonl-symm-posit-prop1}
  For any integer~$k\geq 1$ and a function~$B=B(y)$\textup{,} the
  symmetry~$\Upsilon_k(B)$ does exist.
\end{proposition}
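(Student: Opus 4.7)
The plan is to construct the components $\phi_1^{\mathrm{inv}},\phi_2^{\mathrm{inv}},\dots$ of $\Upsilon_k(B)$ by induction on $i$, using the defining equations of invisible symmetries as a recursion. The initial block of $k$ zeros makes all symmetry equations at levels below $k$ vacuous, so the real work is the construction of the nonzero tail starting from the prescribed first entry $\phi_1^{\mathrm{inv}}=B(y)$.

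For the base case $i=1$, the conditions $\tilde{D}_x(\phi_1^{\mathrm{inv}})=\tilde{D}_t(\phi_1^{\mathrm{inv}})=0$ hold tautologically, since neither $\tilde{D}_x$ nor $\tilde{D}_t$ contains $\partial/\partial y$ and their nonlocal pieces act only through the $\partial/\partial q_i^{(j)}$'s. For the inductive step, one assumes $\phi_1^{\mathrm{inv}},\dots,\phi_{i-1}^{\mathrm{inv}}$ have been produced so that the invisible-symmetry equations of levels $\leq i-1$ hold, and one must then find $\phi_i^{\mathrm{inv}}\in C^\infty(\tilde{\mathcal{E}}^+)$ satisfying the overdetermined pair
\begin{equation*}
  u_y\,\tilde{D}_x(\phi_i^{\mathrm{inv}}) = \tilde{D}_y(\phi_{i-1}^{\mathrm{inv}}),\qquad u_y\bigl(\tilde{D}_t(\phi_i^{\mathrm{inv}}) + \tilde{D}_x(\phi_{i-1}^{\mathrm{inv}})\bigr) = u_x\,\tilde{D}_y(\phi_{i-1}^{\mathrm{inv}}).
\end{equation*}
The crux is the compatibility condition $\tilde{D}_t\tilde{D}_x(\phi_i^{\mathrm{inv}}) = \tilde{D}_x\tilde{D}_t(\phi_i^{\mathrm{inv}})$, which one verifies by expanding both sides via Leibniz, commuting total derivatives freely on $\tilde{\mathcal{E}}^+$, eliminating $\tilde{D}_t(\phi_{i-1}^{\mathrm{inv}})$ and $\tilde{D}_x(\phi_{i-1}^{\mathrm{inv}})$ via the inductive equations (which bring in $\phi_{i-2}^{\mathrm{inv}}$), and using the rdDym relation $u_{ty}=u_xu_{xy}-u_yu_{xx}$ to handle $\tilde{D}_t(1/u_y)$; the two sides then collapse to a common expression.

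With compatibility in hand, $\phi_i^{\mathrm{inv}}$ is obtained by antidifferentiating the first equation along $\tilde{D}_x$ and fixing the integration constant so that the second equation also holds; in practice one realizes $\phi_i^{\mathrm{inv}}$ explicitly as a polynomial in the nonlocal coordinates $q_1,\dots,q_{i-1}$ and their $\tilde{D}_y$-derivatives with coefficients polynomial in $B,B',B'',\dots$, the first instances being $\phi_2^{\mathrm{inv}} = B'(y)\,q_1$ and $\phi_3^{\mathrm{inv}} = B'(y)\,q_2 + \tfrac{1}{2}B''(y)\,q_1^2$. The main obstacle is the compatibility verification: this is the only step where the specific form of the 3D rdDym equation is used in an essential way, and it requires tracing the inductive hypothesis back by two levels (to $\phi_{i-2}^{\mathrm{inv}}$) in order to match the cross-terms produced on the two sides.
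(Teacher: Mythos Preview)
Your approach is different from the paper's and, while the idea is sound, there is a gap in the integration step that the paper's method avoids.

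The paper does not argue by compatibility-and-integration. Instead it writes down an explicit generating operator
\[
  \mathcal{X} = q_1\frac{\partial}{\partial y} + \sum_{i\ge 1}(i+1)\,q_{i+1}\frac{\partial}{\partial q_i},
\]
sets $\phi_1^{\mathrm{inv}}=B(y)$, $\phi_i^{\mathrm{inv}}=\frac{1}{i-1}\mathcal{X}(\phi_{i-1}^{\mathrm{inv}})$, and then observes that for functions depending only on $y,q_1,\dots,q_{i-1}$ the invisible-symmetry equations reduce to the purely algebraic relations $\partial\phi_i^{\mathrm{inv}}/\partial q_j=\partial\phi_{i-1}^{\mathrm{inv}}/\partial q_{j-1}$, which are then checked by induction using the commutator $[\partial/\partial q_j,\mathcal{X}]=j\,\partial/\partial q_{j-1}$. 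So the paper \emph{constructs} $\phi_i^{\mathrm{inv}}$ first and verifies afterward; existence is never in question.

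In your argument, by contrast, the passage from ``the overdetermined system for $\phi_i^{\mathrm{inv}}$ is compatible'' to ``$\phi_i^{\mathrm{inv}}$ exists in $C^\infty(\tilde{\mathcal{E}}^+)$'' is not automatic. Compatibility of $\tilde D_x(\phi_i)=P$, $\tilde D_t(\phi_i)=Q$ tells you only that the horizontal $1$-form $P\,dx+Q\,dt$ is closed; it need not be exact on $\tilde{\mathcal{E}}^+$ without passing to a further covering. Your remark that ``in practice one realizes $\phi_i^{\mathrm{inv}}$ explicitly as a polynomial in $q_1,\dots,q_{i-1}$'' is exactly what is needed, but it has to be \emph{proved} inductively, not asserted: you must show that the right-hand sides, built from $\phi_{i-1}^{\mathrm{inv}}$, always admit a primitive of that polynomial form. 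Once you set up that induction carefully you are essentially forced to discover a recursion equivalent to the paper's operator $\mathcal{X}$. Your first two explicit values $\phi_2^{\mathrm{inv}}=B'q_1$ and $\phi_3^{\mathrm{inv}}=B'q_2+\tfrac12 B''q_1^2$ agree with what $\mathcal{X}$ produces, so you are on the right track; what is missing is the general inductive mechanism that guarantees this pattern persists.
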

\begin{proof}
  Consider the operator
  \begin{equation*}
    \mathcal{X} = q_1\frac{\partial}{\partial y} + \sum_{i=1}^\infty
    (i+1)q_{i+1}\frac{\partial}{\partial q_i}
  \end{equation*}
  and define
  \begin{equation}\label{eq:16}
    \phi_1^{\mathrm{inv}} = B(y), \qquad \phi_i^{\mathrm{inv}} =
    \frac{1}{i-1}\mathcal{X}(\phi_{i-1}^{\mathrm{inv}}), \quad i>1.
  \end{equation}
  Note that the defining equations for invisible symmetries can be rewritten
  in the form
  \begin{align*}
    &\frac{\partial \phi_2^{\mathrm{inv}}}{\partial q_1} = \frac{\partial
    B}{\partial y},\\
    &\dots\\
    &\frac{\partial\phi_i^{\mathrm{inv}}}{\partial q_{i-1}} =
      \frac{\partial\phi_{i-1}^{\mathrm{inv}}}{\partial q_{i-2}}, \dots,
      \frac{\partial \phi_i^{\mathrm{inv}}}{\partial q_1} = \frac{\partial
      \phi_{i-1}^{\mathrm{inv}}}{\partial y},\\
    &\dots
  \end{align*}
  Let us prove by induction the equalities
  \begin{equation*}
    \frac{\partial\phi_i^{\mathrm{inv}}}{\partial q_j} =
    \frac{\partial\phi_{i-1}^{\mathrm{inv}}}{\partial q_{j-1}}
  \end{equation*}
  (we formally set~$q_0=y$). The case~$i=2$ is checked by straightforward
  computations. Assume now that the statement is valid for some~$i>2$ and note
  that
  \begin{equation*}
    \left[\frac{\partial}{\partial q_j},\mathcal{X}\right] =
    j\frac{\partial}{\partial q_{j-1}}.
  \end{equation*}
  Then
  \begin{multline*}
    \frac{\partial\phi_{i+1}^{\mathrm{inv}}}{\partial q_j} = \frac{1}{i}
    \left( j\frac{\partial\phi_i^{\mathrm{inv}}}{\partial q_{j-1}} +
      \mathcal{X} \left(\frac{\partial \phi_i^{\mathrm{inv}}}{\partial
          q_j}\right)\right) = \frac{1}{i}\left(j\frac{\partial
        \phi_i^{\mathrm{inv}}}{\partial q_{j-1}} + \mathcal{X}
      \left(\frac{\partial \phi_{i-1}^{\mathrm{inv}}}{\partial
          q_{j-1}}\right)\right) \\
    = \frac{1}{i} \left( j \frac{\partial\phi_i^{\mathrm{inv}}}{\partial
        q_{j-1}} - (j-1)\frac{\partial\phi_{i-1}^{\mathrm{inv}}}{\partial
        q_{j-2}} + \frac{\partial
        \mathcal{X}(\phi_{i-1}^{\mathrm{inv}})}{\partial q_{j-1}}\right) =
    \frac{1}{i}\frac{\partial}{\partial q_{j-1}}\left(\phi_i^{\mathrm{inv}} +
      \mathcal{X}(\phi_{i-1}^{\mathrm{inv}})\right) \\
    = \frac{1}{i}\frac{\partial}{\partial q_{j-1}}(\phi_i^{\mathrm{inv}}
    +(i-1)\phi_i^{\mathrm{inv}}) = \frac{\partial
      \phi_i^{\mathrm{inv}}}{\partial q_{j-1}},
  \end{multline*}
  and this finishes the proof.
\end{proof}

Now, direct computations show that the functions
\begin{equation}\label{eq:14}
  \psi_{-1} = q_1u_y + x,\qquad
  \psi_{-2} = (2q_2 - q_1\,q_1^{(1)})u_y
\end{equation}
are shadows in the positive covering, i.e., they satisfy
Equation~\eqref{eq:9}.

\begin{proposition}
  \label{sec:lifts-local-symm-prop5}
  The shadows~\eqref{eq:14} can be extended to symmetries
  of~$\tilde{\mathcal{E}}^+$.
\end{proposition}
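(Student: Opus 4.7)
The plan is to exhibit explicit formulas for the nonlocal components $\phi_i$ of each lift and verify the defining system~\eqref{eq:10}--\eqref{eq:13} directly. The weight grading of Section~\ref{sec:weights} is essential: homogeneity of~$\Ex_\Phi$ forces~$|\phi_i| = w - i$, where $w$ is the symmetry weight, so we seek~$\phi_i$ of weight~$-1-i$ for the $\psi_{-1}$-extension and~$-2-i$ for the $\psi_{-2}$-extension. This reduces each $\phi_i$ to a finite-dimensional polynomial ansatz in the coordinates~$q_j^{(l)}$ together with local jet variables.

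For~$\psi_{-1} = q_1 u_y + x$, the ansatz
\begin{equation*}
  \phi_i = -(i+1)\,q_{i+1} + q_1\,q_i^{(1)}, \qquad i\geq 1,
\end{equation*}
(which extends the pattern of the lifts of~$\psi_0$ and~$\theta_{-1}(A)$ in Proposition~\ref{sec:lifts-local-symm-prop-2}) is my primary candidate. Equations~\eqref{eq:10}--\eqref{eq:11} at $i=1$ fall to a direct computation using~$q_{1,x}=1/u_y$ and~$q_{1,t}=u_x/u_y$. For~$i>1$, Equation~\eqref{eq:13} reduces, after substituting~$q_{i,x} = q_{i-1}^{(1)}/u_y$, to the identity
\begin{equation*}
  \tilde{D}_y(\phi_{i-1}) = q_1^{(1)}\,q_{i-1}^{(1)} - i\,q_i^{(1)} + q_1\,q_{i-1}^{(2)},
\end{equation*}
which is immediate from applying~$\tilde{D}_y$ to the ansatz, and Equation~\eqref{eq:12} is handled analogously, with the $u_x$-terms grouping to produce the right-hand side through the relation~$q_{i,t}+q_{i-1,x} = (u_x/u_y)\,q_{i-1}^{(1)}$.

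For~$\psi_{-2} = (2q_2 - q_1 q_1^{(1)})u_y$ the ansatz is a polynomial of weight $-2-i$ in the $q_j^{(l)}$ and local jets, now allowing cubic terms. The strategy is: solve~\eqref{eq:10}--\eqref{eq:11} to pin down~$\phi_1$ as a linear combination of the weight-$(-3)$ monomials~$q_3$,~$q_1 q_2^{(1)}$,~$q_2 q_1^{(1)}$,~$q_1^3$,~$\dots$; then determine~$\phi_i$ for~$i>1$ recursively from~\eqref{eq:13} and verify~\eqref{eq:12} at each step. The construction mirrors the~$\psi_{-1}$ case step by step.

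The principal obstacle is the consistency of the overdetermined pair~\eqref{eq:12}--\eqref{eq:13}, since together they prescribe both~$\tilde{D}_x(\phi_i)$ and~$\tilde{D}_t(\phi_i)$ and must satisfy the commutativity~$\tilde{D}_t\tilde{D}_x(\phi_i)=\tilde{D}_x\tilde{D}_t(\phi_i)$ on~$\tilde{\mathcal{E}}^+$. For~$\psi_{-1}$ the closed-form ansatz makes this automatic. For~$\psi_{-2}$, any residual obstruction appearing at a finite depth $i$ is itself a homogeneous invisible datum in the depth-$i$ sense, hence can be absorbed by adjusting with an invisible symmetry~$\Upsilon_k(B)$ furnished by Proposition~\ref{sec:nonl-symm-posit-prop1}; in this way the construction proceeds without obstruction to all orders.
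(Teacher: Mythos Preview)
Your treatment of~$\psi_{-1}$ is correct and coincides with the paper: the same closed formula
\[
  \psi_{-1}^i = -(i+1)\,q_{i+1} + q_1\,q_i^{(1)}
\]
is exhibited, and verification proceeds as in Proposition~\ref{sec:lifts-local-symm-prop-2}.

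For~$\psi_{-2}$, however, your argument is incomplete. You never produce the components~$\phi_i$; you only describe a procedure (solve for~$\phi_1$ from a weight ansatz, then propagate via~\eqref{eq:13}, then check~\eqref{eq:12}) and assert that whatever obstructions arise ``can be absorbed by adjusting with an invisible symmetry~$\Upsilon_k(B)$.'' That last step is not justified: at depth~$i$ the freedom furnished by invisible symmetries is only an arbitrary function~$B(y)$, whereas a potential obstruction to the compatibility of~\eqref{eq:12}--\eqref{eq:13} is a priori a function on~$\tilde{\mathcal{E}}^+$ of the appropriate weight. You would need to show that any such obstruction actually lies in this one-function-of-$y$ space, and that killing it does not generate new obstructions at later depths. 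None of this is addressed, so as written the~$\psi_{-2}$ case is a plan rather than a proof.

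The paper avoids this entirely by simply writing down the closed formula
\[
  \psi_{-2}^i = -(i+2)\,q_{i+2} + q_1\,q_{i+1}^{(1)} + \bigl(2q_2 - q_1q_1^{(1)}\bigr)\,q_i^{(1)},
\]
after which the verification of~\eqref{eq:10}--\eqref{eq:13} is the same direct computation as for~$\psi_{-1}$. Note the pattern: the coefficient of~$q_i^{(1)}$ is exactly~$\psi_{-2}/u_y$, mirroring how~$\psi_{-1}/u_y = q_1 + x/u_y$ governs the~$\psi_{-1}$ lift. With this formula in hand no obstruction theory is needed.
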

\begin{proof}
  Let us set
  \begin{equation*}
    \Psi_{-1}=\langle\psi_{-1},\psi_{-1}^1,\dots,\psi_{-1}^i,\dots\rangle,\qquad
    \Psi_{-2}=\langle\psi_{-2},\psi_{-2}^1,\dots,\psi_{-2}^i,\dots\rangle,
  \end{equation*}
  where
  \begin{equation*}
    \psi_{-1}^i = -(i+1)q_{i+1} + q_i^{(1)}q_1,\qquad
    \psi_{-2}^i = -(i+2)q_{i+2} + q_1q_{i+1}^{(1)} + (2q_2 - q_1q_1^{(1)})q_i^{(1)}.
  \end{equation*}
  The rest of the proof is similar to that of
  Proposition~\textup{\ref{sec:lifts-local-symm-prop-2}}
\end{proof}

Obviously,
\begin{equation*}
  \abs{\Psi_{-1}} = -1,\qquad\abs{\Psi_{-2}} = -2.
\end{equation*}

We now define two hierarchies of nonlocal symmetries by
\begin{align*}
  \Psi_{-k} &= \ad_{-1}^{k-2}(\Psi_{-2}),&&k\geq 3,\\
  \Upsilon_{-k}(B) &=\{\Psi_{-k-1},\Upsilon_1(B)\},&&k\geq0,
\end{align*}
where
\begin{equation*}
  \ad_{-1}(\Phi) = \{\Phi,\Psi_{-1}\}.
\end{equation*}
Obviously,
\begin{equation*}
  \abs{\Psi_{-k}} = \abs{\Upsilon_{-k}(B)} = -k
\end{equation*}
and~$\Upsilon_0(B)$ is an extension of the local symmetry~$\upsilon_0(B)$
to~$\tilde{\mathcal{E}}^+$. Elements of the
algebra~$\sym(\tilde{\mathcal{E}}^+)$ are distributed along weights as it is
indicated in Table~\ref{tab:nonloc}.
\begin{table}[bct]
  \centering
  \begin{tabular}{c|cccccccccc}\hline
    Weights:
    &\dots&$-l$&\dots&$-2$&$-1$&$0$&$1$&\dots&$k$&\dots\\
    \hline
    &\dots&$\Psi_{-l}$&\dots&$\Psi_{-2}$&$\Psi_{-1}$&$\Psi_0$&&&&\\
    &&&&$\Theta_{-2}(A)$&$\Theta_{-1}(A)$&$\Theta_0(A)$&&&&\\
    &\dots&$\Upsilon_{-l}(B)$&\dots&$\Upsilon_{-2}(B)$
                          &$\Upsilon_{-1}(B)$&$\Upsilon_0(B)$&$\Upsilon_1(B)$
                                       &\dots&$\Upsilon_k(B)$&\dots\\
    \hline
  \end{tabular}\\ \ \\
  \caption{Distribution of nonlocal symmetries in~$\tilde{\mathcal{E}}^+$
    along weights}
  \label{tab:nonloc}
\end{table}

\subsubsection{The Lie algebra structure}
\label{sec:lie-algebra-struct}
To compute the commutators, we shall need asymptotic estimates for coefficient
of symmetries that constitute a basis of~$\sym(\tilde{\mathcal{E}}^+)$.

We begin with the symmetries~$\Psi_{-k}$, $k\geq 1$, and we are interested in
the higher order terms (with respect to~$q_j$) of the coefficients
at~$\partial/\partial q_i$. Using the notation~\eqref{eq:28}, we have by
definition
\begin{align*}
  \Ex_{\Psi_{-1}} &= \dots +\left(-(i+1)q_{i+1} +
    q_1q_i^{(1)} + o(i-1)\right)\frac{\partial}{\partial q_i} + \dots,\\
  \Ex_{\Psi_{-2}} &= \dots \left(-(i+2)q_{i+2} + q_1q_{i+1}^{(1)} +
    o(i)\right)\frac{\partial}{\partial q_i} + \dots,
\end{align*}
where~$o(k)$ denotes the terms that contain~$q_j$ with~$j\leq k$.
 Assume
now that
\begin{equation*}
  \Ex_{\Psi_{-k}} = \dots + \left(a_k^iq_{i+k} +
    b_k^iq_1q_{i+k-1}^{(1)} + o(i+k-2)\right)\frac{\partial}{\partial q_i} +
  \dots
\end{equation*}
Then
\begin{align*}
  \Ex_{\Psi_{-k-1}} &= [\Ex_{\Psi_{-k}},\Ex_{\Psi_{-1}}] = \dots\\
              &+
                \left((i+k+1)a_k^i -
                (i+1)a_k^{i+1}q_{i+k+1}\phantom{q_{i+k}^{(1)}}\right.\\
              &\left.+ ((i+k)b_k^i-(i+1)b_k^{i+1})q_1q_{i+k}^{(1)} +
                o(i+k-1)\right)\frac{\partial}{\partial q_i} + \dots
\end{align*}
Thus
\begin{equation*}
  a_{k+1}^i = (i+k+1)a_k^i - (i+1)a_k^{i+1},\qquad
  b_{k+1}^i = (i+k)b_k^i -(i+1)b_k^{i+1}
\end{equation*}
and by elementary induction with the base~$a_2^i=-(i+2)$, $b_2^i=1$ we
immediately obtain
\begin{equation}
  \label{eq:15}
  a_k^i=-(k-2)!(k+i),\qquad b_k^i = (k-2)!
\end{equation}
for all~$i\geq 1$ (we formally set~$(-1)! = 1$). To comply with this result,
we change the basic element~$\Psi_0$ by~$\Psi_0\mapsto-\Psi_0$.

Now, we estimate the elements~$\Upsilon_k(B)$. For $k>0$ we use the
Definition~\eqref{eq:16} and by simple computations obtain that
\begin{equation*}
  \phi_i^{\mathrm{inv}} = B'q_{i-1} + B''q_1q_{i-2} + o(i-3)
\end{equation*}
and consequently
\begin{multline*}
  \Ex_{\Upsilon_k(B)}= \phi_1^{\mathrm{inv}}\frac{\partial}{\partial q_k} + \dots +
  \phi_{i-k+1}^{\mathrm{inv}}\frac{\partial}{\partial q_i} + \dots\\
  = B\frac{\partial}{\partial q_k}\dots +
  (B'q_{i-k} + B''q_1q_{i-k-1} + o(i-k-2))\frac{\partial}{\partial q_i} +
  \dots
\end{multline*}
Further,
\begin{multline*}
  \Ex_{\Upsilon_{-k}(B)} =  [\Ex_{\Psi_{-k-1}},\Ex_{\Upsilon_1(B)}] \\
  = \left[\dots + \left(a_{k+1}^iq_{i+k+1} + b_{k+1}^iq_1q_{i+k}^{(1)} +
      o(i+k-1)\right)\frac{\partial}{\partial q_i} +
    \dots,\right.\\
  \left.B\frac{\partial}{\partial q_1} + \dots + \left(B'q_{i-1} +
      B''q_1q_{i-2} + o(i-3)\right)\frac{\partial}{\partial q_i} +
    \dots\right] \\
  = \dots + \left((a_{k+1}^{i-1} - a_{k+1}^i)B'q_{i+k} -
    b_{k+1}^iBq_{i+k}^{(1)} +(a_{k+1}^{i-2} - a_{k+1}^i -
    b_{k+1}^i)B''q_1q_{i+k-1}\right.\\
  +\left. (b_{k+1}^{i-1} -
    b_{k+1}^i)B'q_1q_{i+k-1}^{(1)} + o(i+k-2)\right)\frac{\partial}{\partial
    q_i} + \dots \\
  = (k-1)!\left(\dots +\left(B'q_{i+k} -Bq_{i+k}^{(1)} +
      B''q_1q_{i+k-1}\right)\frac{\partial}{\partial q_i} + \dots\right)
\end{multline*}

Using the obtained estimates, we are ready to compute the commutators
now\footnote{Everywhere below we assume~$s!=1$ when~$s<0$.}:

\begin{proposition}
  \label{sec:nonl-symm-posit-prop2}
  One has the following commutator relations\textup{:}
  \begin{gather*}
    \{\Psi_{-k},\Psi_{-l}\} =
    \frac{(k-2)!(l-2)!}{(k+l-2)!}(k-l)\Psi_{-k-l},\quad k,l\geq 0,\\
    \{\Psi_{-k},\Upsilon_l(B)\} =
    \frac{l(-l-1)!(k-2)!}{(l-k-1)!}\Upsilon_{l-k}(B),\quad k\geq 0,\ l\in
    \mathbb{Z},\\\
    \{\Upsilon_k(B),\Upsilon_l(\tilde{B})\} =
    \frac{(-k-1)!(-l-1)!}{(-k-l-1)!}\Upsilon_{k+l}(B\tilde{B}' -
    B'\tilde{B}),\quad k,l\in\mathbb{Z}.
  \end{gather*}
\end{proposition}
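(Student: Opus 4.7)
The plan is to reduce each identity to a leading-order computation in the nonlocal variables $q_j$, exploiting the weight grading and the asymptotic formulas established just before the proposition. Since the Jacobi bracket preserves weight, each of the three brackets lies in a single graded component of $\sym(\tilde{\mathcal{E}}^+)$. Inspection of Table~\ref{tab:nonloc} shows that the weight-$(-m)$ component is spanned by $\Psi_{-m}$ together with $\Upsilon_{-m}(B)$ (and possibly $\Theta(A)$ when $-m\in\{-2,-1,0\}$). These summands are separated by their leading coefficients at $\partial/\partial q_i$: $\Psi_{-m}$ contributes a parameter-free $q_{i+m}$-term with coefficient $-(m-2)!(m+i)$, whereas $\Upsilon_{-m}(B)$ carries an arbitrary function $B(y)$ together with a distinguishing $y$-derivative term $-(m-1)!\,B\,q_{i+m}^{(1)}$ with \emph{no} $q_1$-prefactor. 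A $\Theta$-summand is pinned down analogously by a function of $t$.

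For $\{\Psi_{-k},\Psi_{-l}\}$, I would apply $\{\Phi,\Psi\}_i = \Ex_\Phi(\Psi_i) - \Ex_\Psi(\Phi_i)$ componentwise and substitute the expansions~\eqref{eq:15}. At position $\partial/\partial q_i$ the leading $q_{i+k+l}$-coefficient equals $a_l^i a_k^{i+l} - a_k^i a_l^{i+k}$, which collapses algebraically to $(k-2)!(l-2)!(l-k)(k+l+i)$; comparing with $a_{k+l}^i=-(k+l-2)!(k+l+i)$ yields the coefficient $\tfrac{(k-2)!(l-2)!}{(k+l-2)!}(k-l)$ on $\Psi_{-k-l}$. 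The absence of an $\Upsilon_{-k-l}(B)$ summand is forced by the observation that bracketing two $\Psi$-symmetries cannot produce a $y$-derivative term of the form $f(y)\,q^{(1)}_\bullet$ without a $q_1$-prefactor, since every $q^{(1)}_\bullet$ appearing in the leading asymptotics of the inputs already carries such a prefactor; a similar parity argument excludes a $\Theta$-contribution at weights $-2,-1,0$.

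For $\{\Psi_{-k},\Upsilon_l(B)\}$ with $l\geq 1$, the same componentwise computation, now incorporating $\phi^{\mathrm{inv}}_i = B'q_{i-1}+B''q_1 q_{i-2}+o(i-3)$ from Proposition~\ref{sec:nonl-symm-posit-prop1}, produces the claimed coefficient $\tfrac{l(-l-1)!(k-2)!}{(l-k-1)!}$ on $\Upsilon_{l-k}(B)$. For $l\leq 0$ one uses the definition $\Upsilon_{-m}(B)=\{\Psi_{-m-1},\Upsilon_1(B)\}$ and the Jacobi identity to reduce to the already-established $\{\Psi,\Psi\}$ rule and the $l=1$ case. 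The third identity is handled similarly: the leading-order contribution to $\{\Upsilon_k(B),\Upsilon_l(\tilde B)\}$ produces the Wronskian $B\tilde B'-B'\tilde B$ from the interaction of the $B\,\partial/\partial y$-type and $B'q_\bullet$-type terms of the two inputs, while the factorial weights combine into $\tfrac{(-k-1)!(-l-1)!}{(-k-l-1)!}$.

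The main obstacle is the uniqueness step underlying this strategy: that a symmetry at a fixed weight is determined by its leading asymptotics at $\partial/\partial q_i$ for large~$i$. This follows from the triangular structure of the defining equations~\eqref{eq:9}--\eqref{eq:13}, which propagate $\phi_i$ from $\phi_{i-1}$ once the leading term is fixed, modulo an invisible-symmetry correction that by Proposition~\ref{sec:nonl-symm-posit-prop1} is itself uniquely determined by the single function $B(y)$ read off from the $B\,q_\bullet^{(1)}$-coefficient. Granted this uniqueness principle, the remaining work is the routine factorial and index-shift bookkeeping inside the sub-leading $q_1q^{(1)}$-terms.
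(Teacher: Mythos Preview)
Your approach is correct and is exactly what the paper does: its entire proof is the single sentence ``A neat use of the above deduced estimates.'' Your expansion---matching the leading $q_{i+m}$- and $q_1q_{i+m-1}^{(1)}$-coefficients coming from~\eqref{eq:15} and the $\Upsilon$-asymptotics, then invoking uniqueness through the triangular structure of~\eqref{eq:9}--\eqref{eq:13} and the classification of invisible symmetries in Proposition~\ref{sec:nonl-symm-posit-prop1}---is precisely what that sentence is gesturing at. One small caveat: your phrasing ``the weight-$(-m)$ component is spanned by $\Psi_{-m}$ together with $\Upsilon_{-m}(B)$'' presupposes a completeness statement that is only established later in Theorem~\ref{sec:lie-algebra-struct-theo1}; but your actual argument does not need it, since comparing the two leading coefficients already pins the bracket down among the \emph{constructed} symmetries, which is all the proposition asserts.
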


\begin{proof}
  A neat use of the above deduced estimates.
\end{proof}

Let us change the initial basis by
\begin{equation*}
  \Psi_{-k}\mapsto \frac{1}{(k-2)!}\Psi_{-k}, \qquad
  \Upsilon_l(B)\mapsto \frac{1}{(-l-1)!}\Upsilon_l(B)
\end{equation*}
and recall a standard construction. Let~$\mathfrak{g}$ be a Lie
$\mathbb{R}$-algebra
and~$\mathbb{R}_n[z]=\mathbb{R}[z]/(z^n)$ be the ring of
truncated polynomials. Then the Lie
algebra~$\mathfrak{g}_{[n]}=\mathbb{R}_n[z]\otimes_{\mathbb{R}}\mathfrak{g}$
with the bracket
\begin{equation*}
  [a\otimes g,b\otimes h] = ab\otimes [g,h],\qquad g,h\in\mathfrak{g}\quad
  a,b\in\mathbb{R}_n[z],
\end{equation*}
is a graded Lie algebra
with~$\mathfrak{g}_0=\dots=\mathfrak{g}_{n-1}=\mathfrak{g}$ and all other
components being trivial. For polynomials in~$z^{-1}$ the similar construction
is denoted by~$\mathfrak{g}_{[-n]}$.  Denote also by~$\mathfrak{V}[t]$ the Lie
algebra of vector fields~$A(t)\partial/\partial t$ on~$\mathbb{R}$.  Then the
following result is valid:

\begin{theorem}
  \label{sec:lie-algebra-struct-theo1}
  The Lie algebra~$\sym(\tilde{\mathcal{E}}^+)$ is isomorphic to the
  semi-direct product of the non-positive part
  \begin{equation*}
    \mathfrak{W}^{-} = \left\{
        Z_k = z^{-k+1} \,\frac{\partial}{\partial z} \,\,\vert\,\, k \in
        \mathbb{N} \cup \{0\}\, 
      \right\}
    \end{equation*}
    of the Witt algebra with the direct sum
    $\mathfrak{L}[y]\oplus \mathfrak{V}_{[-3]}[t]$ of
    \begin{equation*}
      \mathfrak{L}[y] = \left\{
        Y_m(B) = z^{m} \,B(y)\,\frac{\partial}{\partial y} \,\,\vert\,\, m \in
        \mathbb{Z},\,\, 
        B \in C^{\infty}(\mathbb{R})\,
      \right\}
    \end{equation*}
    and
    \begin{equation*}
      \mathfrak{V}[t]_{[-3]} = \left\{
        X_s(A) = z^{s} \,A(t)\,\frac{\partial}{\partial t} \,\,\vert\,\, s \in
        \{0,1,2\},\,\, 
        A \in C^{\infty}(\mathbb{R})\,
      \right\}
    \end{equation*}
    with the natural action of~$z^{-k+1}\partial/\partial z$
    on~$\mathfrak{L}[y]$ and~$\mathfrak{V}[t]_{[-3]}$.
\end{theorem}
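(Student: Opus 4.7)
The plan is to exhibit the isomorphism explicitly on the basis of $\sym(\tilde{\mathcal{E}}^+)$ assembled in the preceding subsections, verify that the structure constants match those of the semi-direct product on the right-hand side, and finally argue that the listed generators exhaust the symmetry algebra.

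First I would perform the change of basis anticipated by the text, namely set
\begin{equation*}
  \tilde{\Psi}_{-k} = \frac{1}{(k-2)!}\Psi_{-k}, \qquad \tilde{\Upsilon}_l(B) = \frac{1}{(-l-1)!}\Upsilon_l(B),
\end{equation*}
together with a sign flip on $\Psi_0$ absorbing the discrepancy in the inductive base case noted after equation~\eqref{eq:15}. Substituting these normalizations into Proposition~\ref{sec:nonl-symm-posit-prop2} collapses the three displayed formulas into
\begin{align*}
  \{\tilde{\Psi}_{-k},\tilde{\Psi}_{-l}\} &= (k-l)\,\tilde{\Psi}_{-(k+l)}, \\
  \{\tilde{\Psi}_{-k},\tilde{\Upsilon}_l(B)\} &= l\,\tilde{\Upsilon}_{l-k}(B), \\
  \{\tilde{\Upsilon}_k(B),\tilde{\Upsilon}_l(\bar{B})\} &= \tilde{\Upsilon}_{k+l}(B\bar{B}'-B'\bar{B}).
\end{align*}
On the target side, a direct computation of $[z^a\partial/\partial z,z^b\partial/\partial z]$ and of $[z^a\partial/\partial z,z^b B(y)\partial/\partial y]$ reproduces $[Z_k,Z_l]=(k-l)Z_{k+l}$, $[Z_k,Y_l(B)]=l\,Y_{l-k}(B)$, and $[Y_k(B),Y_l(\bar{B})]=Y_{k+l}(B\bar{B}'-B'\bar{B})$, so the assignment $\tilde{\Psi}_{-k}\mapsto Z_k$, $\tilde{\Upsilon}_l(B)\mapsto Y_l(B)$ identifies the subalgebra $\langle\tilde{\Psi}_{-k}\rangle\ltimes\langle\tilde{\Upsilon}_l(B)\rangle$ with $\mathfrak{W}^{-}\ltimes\mathfrak{L}[y]$.

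Next I would handle the $t$-dependent sector $\langle\Theta_0(A),\Theta_{-1}(A),\Theta_{-2}(A)\rangle$ by matching $\Theta_{-s}(A)\leftrightarrow X_s(A)$. The brackets among the $\Theta_{-s}$'s are read off directly from Table~\ref{tab:loc-sym} and coincide term-by-term with the truncated bracket $[X_s(A),X_r(\bar{A})]=X_{s+r}(A\bar{A}'-\bar{A}A')$ on $\mathfrak{V}[t]_{[-3]}$, set to zero whenever $s+r>2$. The cross-brackets $\{\Psi_{-k},\Theta_{-s}(A)\}$ I would obtain inductively: the base case $k=0$ appears in Table~\ref{tab:loc-sym}, and the general case is pushed through by the Jacobi identity together with the recursive definition $\Psi_{-k}=\ad_{-1}^{k-2}(\Psi_{-2})$, reproducing the truncated action of $\mathfrak{W}^-$ on $\mathfrak{V}[t]_{[-3]}$. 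The vanishing $\{\Upsilon_l(B),\Theta_{-s}(A)\}=0$, which encodes the direct-sum decomposition $\mathfrak{L}[y]\oplus\mathfrak{V}[t]_{[-3]}$, can be checked by induction on the nonlocal depth using the defining equations~\eqref{eq:10}--\eqref{eq:13}, exploiting that the $y$- and $t$-dependent characteristics depend on disjoint collections of jet and nonlocal variables.

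The remaining step is exhaustiveness: every element of $\sym(\tilde{\mathcal{E}}^+)$ is a finite $\mathbb{R}$-linear combination of the generators listed in Table~\ref{tab:nonloc}. Since all generators are weight-homogeneous it suffices to argue weight by weight. For positive weight $k>0$ the shadow equation forces $\phi=0$ (there are no local characteristics of positive weight on $\mathcal{E}$), so any such symmetry is invisible and Proposition~\ref{sec:nonl-symm-posit-prop1} identifies it as $\Upsilon_k(B)$ for a unique $B\in C^\infty(\mathbb{R})$. For non-positive weights one solves the shadow equation~\eqref{eq:9} within functions of jet order bounded by the weight and verifies that the only shadows produced are $\Psi_{-k}$, $\Theta_{-s}(A)$ and $\Upsilon_{-k}(B)$; any ambiguity in lifting a given shadow is invisible and hence absorbed into the $\Upsilon_k(B)$ family. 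The main obstacle will be this final exhaustiveness step, because the shadow equation~\eqref{eq:9} on $\tilde{\mathcal{E}}^+$ couples $u$-jets with the infinite tower $\{q_i^{(j)}\}$ and ruling out spurious shadows demands careful bookkeeping of which monomials in the nonlocal variables can occur at each weight; by contrast the bracket identifications of the first two paragraphs are essentially mechanical consequences of Proposition~\ref{sec:nonl-symm-posit-prop2}, Table~\ref{tab:loc-sym}, and the asymptotic expansions already derived for $\Psi_{-k}$ and $\Upsilon_l(B)$.
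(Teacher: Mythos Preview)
Your approach is essentially the paper's: rescale as indicated, invoke Proposition~\ref{sec:nonl-symm-posit-prop2}, and read off the isomorphism $\tilde{\Psi}_{-k}\mapsto Z_k$, $\tilde{\Upsilon}_l(B)\mapsto Y_l(B)$, $\Theta_{-s}(A)\mapsto X_s(A)$ from the matching structure constants. The paper in fact stops there---it states the theorem immediately after the rescaling and records the isomorphism in one sentence, without writing out the $\{\Psi_{-k},\Theta_{-s}(A)\}$ and $\{\Upsilon_l(B),\Theta_{-s}(A)\}$ brackets or addressing exhaustiveness at all---so your second and third paragraphs go beyond what the paper supplies and correctly identify work that a fully rigorous proof would require.
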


In the theorem above the isomorphism maps $\Psi_{-k}$ to $Z_k$,
$\Upsilon_m(B)$ to
~$Y_m(B)$ and
$\Theta_{-s}(A)$ to $X_s(A)$.

\subsection{Symmetries in the negative hierarchy}
\label{sec:nonl-symmtr-negat}

Using Proposition~\ref{sec:negative-hierarchy-gauge}, we set~$F=1$ in the
defining equations of the negative hierarchy. After such a simplification, the
study of the negative case becomes quite similar to that of the positive
one. Any symmetry in~$\tilde{\mathcal{E}}^-$ is a vector field
\begin{equation}\label{eq:29}
  \Ex_\phi =\tilde{\Ev}_\phi +\sum_{i=1}^\infty
  \left(\phi_i\frac{\partial}{\partial r_i} + \sum_{j=1}^\infty
    \tilde{D}_t^j(\phi_i)\frac{\partial}{\partial r_i^{(j)}}\right),
\end{equation}
where~$\tilde{\Ev}_\phi$ with the total derivatives on~$\tilde{\mathcal{E}}^-$
and
\begin{equation*}
  \Phi = \langle\phi_0=\phi,\phi_1,\dots,\phi_i,\dots\rangle,\qquad \phi_i\in
  C^\infty(\tilde{\mathcal{E}}^-),
\end{equation*}
satisfies the equations
\begin{align}\label{eq:19}
   &\tilde{\ell}_{\mathcal{E}}(\phi) \equiv \tilde{D}_t\tilde{D}_y(\phi) -
    u_x\tilde{D}_x\tilde{D}_y(\phi) +
    u_y\tilde{D}_x^2(\phi) -
    u_{xy}\tilde{D}_x(\phi) +
    u_{xx}\tilde{D}_y(\phi) =
     0,\\ \label{eq:20}
  &\tilde{D}_x(\phi_1) = \tilde{D}_t(\phi) -
    2u_x\tilde{D}_x(\phi),\\ \label{eq:21}
  &\tilde{D}_y(\phi_1) = -u_y\tilde{D}_x(\phi) -
    u_x\tilde{D}_y(\phi),\\ \label{eq:22}
  &\tilde{D}_x(\phi_i) = r_{i-1,x}\tilde{D}_x(\phi) +
    u_x\tilde{D}_x(\phi_{i-1}) - \tilde{D}_t(\phi_{i-1}),\\ \label{eq:23}
  &\tilde{D}_y(\phi_i) = r_{i-1,x}\tilde{D}_y(\phi) + u_y\tilde{D}_x(\phi_{i-1}),
\end{align}
$i>1$. Like in Section~\ref{sec:nonl-symmtr-posit}, for any two
symmetries~$\Phi$ and~$\Psi$ their Jacobi bracket~$\{\Phi,\Psi\}$ is defined
by
\begin{equation*}
  \Ex_{\{\Phi,\Psi\}} = [\Ex_\Phi,\Ex_\psi].
\end{equation*}

\subsubsection{Lifts of local symmetries and hierarchies of nonlocal ones}
\label{sec:lifts-local-symm-1}

In what follows, we shall need the operator
\begin{equation}
  \label{eq:18}
  \mathcal{Y}_+ = -x\frac{\partial}{\partial t} + 2u\frac{\partial}{\partial x}
  + 3r_1\frac{\partial}{\partial u} + \sum_{i\geq 1}
  (i+3)r_{i+1}\frac{\partial}{\partial r_i}
\end{equation}

\begin{proposition}
  \label{sec:nonl-symm-negat-prop3}
  The symmetries~$\psi_0$\textup{,} $\upsilon(B)$\textup{,}
  and~$\theta_{-2}(A)$ can be lifted to~$\tilde{\mathcal{E}}^-$.
\end{proposition}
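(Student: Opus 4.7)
The plan is to follow the pattern of the proof of Proposition~\ref{sec:lifts-local-symm-prop-2}: for each of the three symmetries I would write down explicit candidate components $\phi_i$, $i\geq 1$, and verify by direct substitution that the resulting sequence $\Phi = \langle \phi, \phi_1, \phi_2,\dots\rangle$ solves~\eqref{eq:19}--\eqref{eq:23}. Equation~\eqref{eq:19} is automatic because each $\phi$ is a local symmetry of $\mathcal{E}$ and $\tilde{\ell}_{\mathcal{E}}$ coincides with $\ell_{\mathcal{E}}$ on $C^{\infty}(\mathcal{E})$, so only~\eqref{eq:20}--\eqref{eq:23} require checking.

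The cleanest case is $\theta_{-2}(A) = A$. I would take $\phi_1 = A'x$ (a one-line check gives~\eqref{eq:20}--\eqref{eq:21}) and then generate the higher components by the recursion $\phi_{k+1} = \frac{1}{k+1}\mathcal{Y}_+(\phi_k)$, $k\geq 1$, with $\mathcal{Y}_+$ as in~\eqref{eq:18}. This produces, in order, $\phi_2 = A'u - \frac{1}{2}A''x^2$ and $\phi_3 = A'r_1 - A''xu + \frac{1}{6}A'''x^3$, which one verifies directly against~\eqref{eq:22}--\eqref{eq:23}. The induction step mirrors that used in Proposition~\ref{sec:negative-hierarchy-gauge}, since $\mathcal{Y}_+$ was introduced precisely so as to intertwine the defining relations $r_{i,x}=u_xr_{i-1,x}-r_{i-1,t}$ and $r_{i,y}=u_yr_{i-1,x}$ with the equations satisfied by $\phi_i$.

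For $\psi_0 = xu_x - 2u$ the weight considerations of Section~\ref{sec:weights}, together with the positive-case formula $\psi_0^i = iq_i + xq_{i,x}$, suggest the leading ansatz $\psi_0^i = (i+2)r_i - xr_{i,x}$. A direct substitution confirms~\eqref{eq:20}--\eqref{eq:21} for $i=1$, but already at $i=2$ the right-hand side of~\eqref{eq:22} produces an extra contribution proportional to $r_{1,x}\tilde{D}_x(\phi)$ which must be absorbed by adding lower-order corrections polynomial in $r_j$, $j<i$. Pinning these corrections down in closed form is the principal algebraic task of this case.

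The case $\upsilon_0(B) = Bu_y$ starts from $\phi_1 = -Bu_xu_y$, which satisfies~\eqref{eq:20}--\eqref{eq:21} after a short computation. The higher $\phi_i$ must then be obtained by successively integrating~\eqref{eq:22}--\eqref{eq:23}. This is the main obstacle of the proof: neither the $\mathcal{Y}_+$-recursion of the $\theta_{-2}(A)$ case nor the scaling template of the $\psi_0$ case applies here, so the explicit form of $\phi_i$ for $i\geq 2$ has to be uncovered by ad hoc construction. Once the formulas are in hand, the final verification is a mechanical, if lengthy, calculation patterned on the displayed computation in the proof of Proposition~\ref{sec:lifts-local-symm-prop-2}.
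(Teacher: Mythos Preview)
Your plan for~$\theta_{-2}(A)$ is essentially the paper's: the paper takes $\theta_{-2}^1(A)=-xA'$ and then $\theta_{-2}^i(A)=\frac{1}{i}\mathcal{Y}_+(\theta_{-2}^{i-1}(A))$, which is your recursion up to an overall sign. So that part is fine.

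The other two cases, however, are misjudged. For~$\psi_0$ your ``leading ansatz'' is already the full answer: the paper sets $\psi_0^i=-(i+2)r_i+xr_{i,x}$ and this satisfies~\eqref{eq:20}--\eqref{eq:23} \emph{exactly}, with no lower-order corrections. A direct check of~\eqref{eq:22} gives
\[
\tilde{D}_x(\psi_0^i)=-(i+1)r_{i,x}+xr_{i,xx}
= -(i+1)(u_xr_{i-1,x}-r_{i-1,t})+x\,\tilde{D}_x(u_xr_{i-1,x}-r_{i-1,t}),
\]
and the right-hand side of~\eqref{eq:22} produces precisely the same expression once you substitute $\tilde{D}_x(\psi_0)=xu_{xx}-u_x$ and $\psi_0^{i-1}=-(i+1)r_{i-1}+xr_{i-1,x}$. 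The ``extra contribution proportional to $r_{i-1,x}\tilde{D}_x(\phi)$'' that you anticipate is already absorbed; re-do the computation with the correct sign and you will see nothing is left over.

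For~$\upsilon_0(B)$ you have it exactly backwards: this is the \emph{simplest} of the three. The defining relations~\eqref{eq:8} contain no explicit~$y$, so the point symmetry $B(y)\,\partial/\partial y$ lifts canonically to the covering, giving $\upsilon_0^i(B)=B\,r_{i,y}$ for all~$i\geq 1$. Your starting value $\phi_1=-Bu_xu_y$ is $\pm Br_{1,y}$ in disguise (since $r_{1,y}=u_xu_y$), and the uniform formula persists: substituting $\phi_i=Br_{i,y}$ into~\eqref{eq:22}--\eqref{eq:23} and using $r_{i,y}=u_yr_{i-1,x}$ together with the equation $u_{ty}=u_xu_{xy}-u_yu_{xx}$ closes the verification in two lines. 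No ad hoc construction is needed.
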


\begin{proof}
  We denote the lifts by
  \begin{align*}
    \Psi_0&=\langle\psi_0,\psi_0^1,\dots,\psi_0^i,\dots\rangle,\\
    \Upsilon_0(B)&=\langle\upsilon_0(B),\upsilon_0^1(B), \dots,
                   \upsilon_0^i(B), \dots\rangle,\\
    \Theta_{-2}(A)&=\langle\theta_{-2}(A), \theta_{-2}^1(A), \dots,
                    \theta_{-2}^i(A), \dots\rangle
  \end{align*}
  and set
  \begin{align*}
    \psi_0^i&=-(i+2)r_i+xr_{i,x},&&i\geq 1,\\
    \upsilon_0^i(B)&=Br_{i,y},&&i\geq 1,\\
    \theta_{-2}^1(A)&=-xA',\quad
                      \theta_{-2}^i(A)
                      =\frac{1}{i}\mathcal{Y}_{+}(\theta_{-2}^{i-1}(A))&&i>1.
  \end{align*}
  The rest of the proof is similar to that of
  Proposition~\ref{sec:lifts-local-symm-prop-2}.
\end{proof}

The next step is to describe invisible symmetries. These symmetries must
satisfy
\begin{align*}
  &\tilde{D}_x(\phi_1)=0,
  &&\tilde{D}_y(\phi_1)=0,\\
  &\tilde{D}_x(\phi_i)=u_x\tilde{D}_x(\phi_{i-1}) - \tilde{D}_t(\phi_{i-1}),
  &&\tilde{D}_y(\phi_i)=u_y\tilde{D}_x(\phi_{i-1}),
\end{align*}
where~$i>1$.

\begin{proposition}
  For every~$A=A(t)$ and~$k\geq3$ there exists a unique invisible
  symmetry~$\Theta_{-k}(A)$ of weight~$\abs{\Theta_{-k}(A)}=-k$.
\end{proposition}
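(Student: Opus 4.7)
The plan is to construct $\Theta_{-k}(A)=\langle 0,\dots,0,A(t),\phi_{k-1},\phi_k,\dots\rangle$ component by component. The first $k-2$ entries vanish, the seed is $\phi_{k-2}=A(t)$, and each subsequent $\phi_i$ is obtained by integrating the invisible-symmetry equations (which are \eqref{eq:22}--\eqref{eq:23} specialised to $\phi=0$)
\begin{equation*}
\tilde{D}_x(\phi_i)=u_x\tilde{D}_x(\phi_{i-1})-\tilde{D}_t(\phi_{i-1}),\qquad
\tilde{D}_y(\phi_i)=u_y\tilde{D}_x(\phi_{i-1}).
\end{equation*}
The seed $\phi_{k-2}=A(t)$ trivially satisfies these relations at $i=k-2$ (with $\phi_{k-3}=0$), while the linearised equation \eqref{eq:19} is fulfilled because $\phi=0$.

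The central step is to verify that, for each $i\ge k-1$, the two recursive equations for $\phi_i$ are compatible, i.e.
\begin{equation*}
\tilde{D}_y\bigl(u_x\tilde{D}_x(\phi_{i-1})-\tilde{D}_t(\phi_{i-1})\bigr)=\tilde{D}_x\bigl(u_y\tilde{D}_x(\phi_{i-1})\bigr).
\end{equation*}
After cancelling the common $u_{xy}\tilde{D}_x(\phi_{i-1})$ term this reduces to the identity $\tilde{\ell}_{\mathcal{E}}(\phi_{i-1})=-u_{xy}\tilde{D}_x(\phi_{i-1})+u_{xx}\tilde{D}_y(\phi_{i-1})$. I would verify it by substituting the inductive relations for $\tilde{D}_x(\phi_{i-1})$ and $\tilde{D}_y(\phi_{i-1})$ in terms of $\phi_{i-2}$ into both sides. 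Each side then collapses to $-u_{ty}\tilde{D}_x(\phi_{i-2})+u_{xy}\tilde{D}_t(\phi_{i-2})$ once one invokes the rdDym relation $u_{ty}=u_xu_{xy}-u_yu_{xx}$ and the commutation $[\tilde{D}_t,\tilde{D}_x]=0$; compatibility thus holds and $\phi_i$ exists.

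For uniqueness, a solution of the recursive pair is determined only up to an element of $\ker\tilde{D}_x\cap\ker\tilde{D}_y$, i.e.\ a function of $t$ alone. Homogeneity of the symmetry forces $\abs{\phi_i}=i+2-k$, which is strictly positive for every $i>k-2$, whereas any function of $t$ has weight zero; hence the integration constant must vanish and $\phi_i$ is uniquely determined. The same weight bookkeeping shows that $\phi_j=0$ is forced for every $j<k-2$ (for such $j$ the required weight $j+2-k$ is negative, while the recursion again leaves only a function of $t$), so $A=\phi_{k-2}$ is indeed the only free datum. The main obstacle is the compatibility identity above; once it is in hand, both existence and uniqueness follow by a routine induction on $i$.
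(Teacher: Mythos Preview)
Your approach differs from the paper's: rather than verifying compatibility abstractly, the paper gives an explicit recursive formula
\[
  \phi_1^{\mathrm{inv}}=A,\qquad \phi_i^{\mathrm{inv}}=\frac{1}{i-1}\mathcal{Y}_+(\phi_{i-1}^{\mathrm{inv}}),\quad i>1,
\]
using the operator $\mathcal{Y}_+$ from~\eqref{eq:18}, and then checks by induction that these functions satisfy the invisible-symmetry equations. This is parallel to the proof of Proposition~\ref{sec:nonl-symm-posit-prop1} for the positive covering.

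Your argument has one genuine soft spot. The step ``compatibility thus holds and $\phi_i$ exists'' is not automatic on a covering: the conditions $\tilde{D}_x(\phi_i)=f$, $\tilde{D}_y(\phi_i)=g$ with $\tilde{D}_y(f)=\tilde{D}_x(g)$ describe a closed horizontal $1$-form in the $(x,y)$-directions, and such a form is in general the defining data of a \emph{new} nonlocal variable (cf.\ the construction of $\tilde{\mathcal{E}}_\omega$ in Section~\ref{sec:diff-cover}) rather than an existing function on $\tilde{\mathcal{E}}^-$. What you need is that this particular form is horizontally exact on $\tilde{\mathcal{E}}^-$ itself. That is true here---the right-hand sides are polynomial in $x$, $u$, $r_1,\dots,r_{i-k}$ and their $t$-derivatives, and every term (e.g.\ $r_{j,x}$) integrates back to a function already present on $\tilde{\mathcal{E}}^-$---but you should say so, or exhibit the antiderivative. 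The operator $\mathcal{Y}_+$ is precisely what supplies this antiderivative in closed form, which is why the paper's route is cleaner for existence.

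Your uniqueness argument, on the other hand, is sharper than what the paper writes: the weight bookkeeping forcing the $t$-dependent integration constants to vanish, and forcing $\phi_j=0$ for $j<k-2$, is correct and makes the ``unique'' in the statement explicit.
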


\begin{proof}
  Let us use the notation~\eqref{eq:29} and set
  \begin{equation*}
    \Ex_{\Theta_{-k}(A)}=\phi_1^{\mathrm{inv}}\frac{\partial}{\partial r_{k-2}} +
    \dots + \phi_i^{\mathrm{inv}}\frac{\partial}{\partial r_{k+i-3}} + \dots,
  \end{equation*}
  where~$\phi_1^{\mathrm{inv}}=A$
  \begin{equation*}
    \phi_i^{\mathrm{inv}}
    =\frac{1}{i-1}\mathcal{Y}_{+}(\phi_{i-1}^{\mathrm{inv}}), \qquad i>1.
  \end{equation*}
  The proof is accomplished by induction on~$i$.
\end{proof}

Consider now two functions
\begin{equation*}
  \psi_1=3r_1+xu_t-2uu_x,\qquad \psi_2=4r_2+xr_1^{(1)}+2uu_t-(xu_t+3r_1)u_x.
\end{equation*}
It is straightforwardly checked that they are shadows
in~$\tilde{\mathcal{E}}^-$, i.e., satisfy Equation~\eqref{eq:19}.

\begin{proposition}
  The shadows~$\psi_1$ and~$\psi_2$ are extended to nonlocal symmetries
  of~$\tilde{\mathcal{E}}^-$.
\end{proposition}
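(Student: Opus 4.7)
The plan is to mirror the proof of Proposition~\ref{sec:lifts-local-symm-prop5} for the positive covering: propose explicit formulas for the nonlocal components of
$$\Psi_1=\langle\psi_1,\psi_1^1,\psi_1^2,\dots\rangle,\qquad \Psi_2=\langle\psi_2,\psi_2^1,\psi_2^2,\dots\rangle,$$
and verify by direct substitution that they satisfy the defining equations \eqref{eq:20}--\eqref{eq:23}. To locate the ansatz I would first exploit the grading: since $|\Ex_{\Psi_k}|=k$ and $|r_i|=i+2$, each $\psi_k^i$ is forced to be a weight-homogeneous expression of weight $i+2+k$. With this constraint, equations~\eqref{eq:20},~\eqref{eq:21} with $\phi=\psi_k$ determine $\psi_k^1$ uniquely up to an additive invisible piece (which we set to zero), and then \eqref{eq:22},~\eqref{eq:23} for $i=2,3,\dots$ fix $\psi_k^i$ recursively. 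By analogy with the positive case, and consistently with the coefficients $3$ and $4$ of $r_1$, $r_2$ in the shadows themselves, one expects leading terms $\psi_1^i=(i+3)\,r_{i+1}+\dots$ and $\psi_2^i=(i+4)\,r_{i+2}+\dots$, with corrections that are weight-homogeneous polynomials in $r_1,\dots,r_i$, their $t$-derivatives $r_j^{(1)}$, and the local jet variables $x$, $u$, $u_x$, $u_t$.

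Verification then follows the template of Proposition~\ref{sec:lifts-local-symm-prop-2}. One substitutes the proposed $\psi_k^i$ into both sides of \eqref{eq:20}--\eqref{eq:23}, uses the gauge-simplified relations~\eqref{eq:8}, namely $r_{i,x}=u_xr_{i-1,x}-r_{i-1,t}$ and $r_{i,y}=u_yr_{i-1,x}$, to eliminate $x$- and $y$-derivatives of $r_i$ in favour of $r_{i-1}$, and observes pairwise cancellations of exactly the sort that are boxed in the positive-case computation. An induction on $i$ then collapses the infinite system into a base step (the case $i=1$) and a single generic inductive step.

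The principal obstacle will be the bookkeeping for $\Psi_2$: the shadow $\psi_2=4r_2+xr_1^{(1)}+2uu_t-(xu_t+3r_1)u_x$ mixes five monomials involving both $u$-derivatives and $r_1$, so the correction terms in $\psi_2^i$ must arrange a rather delicate interplay between the local and nonlocal parts of \eqref{eq:22}, \eqref{eq:23}. A cleaner organisation would be to try to express the passage $\psi_k^{i-1}\mapsto\psi_k^i$ through the operator $\mathcal{Y}_+$ of~\eqref{eq:18}, which already generates the invisible tower $\Theta_{-k}(A)$ in the negative hierarchy; adding to $\mathcal{Y}_+(\psi_k^{i-1})/i$ a small correction that accounts for the nontrivial shadow would then reduce the whole verification to a single uniform induction, in the spirit of the role played by the operator $\mathcal{X}$ in Proposition~\ref{sec:nonl-symm-posit-prop1}.
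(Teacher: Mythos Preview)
Your plan is exactly the paper's approach: it writes down explicit closed-form components
\[
\psi_1^i=(i+3)r_{i+1}+xr_i^{(1)}-2ur_{i,x},\qquad
\psi_2^i=(i+4)r_{i+2}+xr_{i+1}^{(1)}+2ur_i^{(1)}-(xu_t+3r_1)r_{i,x},
\]
matching your predicted leading terms, and then defers to a straightforward verification of \eqref{eq:20}--\eqref{eq:23}; the detour through~$\mathcal{Y}_+$ is not needed, since the correction terms are already short enough to guess directly from the shadows.
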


\begin{proof}
  It suffices to set
  \begin{equation*}
    \Psi_1=\langle\psi_1,\psi_1^1,\dots,\psi_1^i,\dots\rangle
  \end{equation*}
  with
  \begin{equation*}
    \psi_1^i=(i+3)r_{i+1} + xr_i^{(1)} - 2ur_{i,x}
  \end{equation*}
  and
  \begin{equation*}
    \Psi_2=\langle\psi_2,\psi_2^2,\dots,\psi_2^i,\dots\rangle
  \end{equation*}
  with
  \begin{equation*}
    \psi_2^i = (i+4)r_{i+2} + xr_{i+1}^{(1)} + 2ur_{i}^{(1)} - (xu_t + 3r_1)r_{i,x}.
  \end{equation*}
  The rest of the proof is a straightforward check of
  Equations~\eqref{eq:20}--\eqref{eq:23}.
\end{proof}
Obviously,
\begin{equation*}
  \abs{\Psi_1} = 1, \qquad \abs{\Psi_2} = 2.
\end{equation*}

Similar to the positive case, we define now the first hierarchy of nonlocal
symmetries by setting
\begin{equation*}
  \Psi_k=\ad_{+1}^{k-2}(\Psi_2),\qquad k\geq 3,
\end{equation*}
where~$\ad_{+1}(\Phi)=\{\Psi_1,\Phi\}$. One has,
\begin{equation*}
  \abs{\Psi_k} = k.
\end{equation*}
The second hierarchy will be defined in the next subsection.

\subsubsection{The Lie algebra structure}
\label{sec:lie-algebra-struct-1}

As above, we need asymptotic estimates to compute the commutators. Similar to
the positive case, we establish by induction the following estimates for the
symmetries~$\Psi_k$:
\begin{equation*}
  \Ex_{\Psi_k}= \dots+\left(a_k^ir_{i+k} + b_k^ixr_{i+k-1}^{(1)} +
  o(i+k-2)\right)\frac{\partial}{\partial r_i} + \dots,
\end{equation*}
where
\begin{equation*}
  a_k^i = (k-2)!(i+k+2),\qquad b_k^i=(k-2)!
\end{equation*}
To have the unified signs, we also rescale~$\Psi_0\mapsto-\Psi_0$.  Using this
estimate, we easily prove the following
\begin{proposition}
  One has the following commutator relations
  \begin{equation*}
    \{\Psi_k,\Psi_l\} = \frac{(l-2)!(k-2)!(l-k)}{(k+l-2)!}\Psi_{k+l}
  \end{equation*}
  for all~$k$\textup{,} $l\geq 0$.
\end{proposition}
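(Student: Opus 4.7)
The plan is to mimic the strategy already used for the positive hierarchy in the proof of Proposition on $\{\Psi_{-k},\Psi_{-l}\}$: read off the leading $r$-coefficient of the Jacobi bracket from the asymptotic estimates recorded just above the statement, match it against the leading coefficient of a scalar multiple of $\Psi_{k+l}$, and then promote the match to a full equality of symmetries.

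First, I would compute the coefficient of $\partial/\partial r_i$ in $[\Ex_{\Psi_k},\Ex_{\Psi_l}]$, keeping only the top-weight nonlocal term $r_{i+k+l}$. Starting from
$$\Ex_{\Psi_m}=\dots+\bigl(a_m^i r_{i+m}+b_m^i x r_{i+m-1}^{(1)}+o(i+m-2)\bigr)\frac{\partial}{\partial r_i}+\dots,\qquad a_m^i=(m-2)!(i+m+2),$$
only two contributions can produce an $r_{i+k+l}$ factor: the action of $\Ex_{\Psi_k}$ on the leading piece $a_l^i r_{i+l}$ of $\psi_l^i$ via the $(i+l)$-th component of $\Psi_k$, and the symmetric contribution obtained by swapping $k$ and $l$. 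The $b$-terms and the $o$-remainders contribute only to strictly lower $r$'s. Collecting factorials yields
$$\bigl(a_l^i a_k^{i+l}-a_k^i a_l^{i+k}\bigr)\,r_{i+k+l}=(k-2)!(l-2)!(l-k)(i+k+l+2)\,r_{i+k+l}.$$
Comparing with the leading $r$-coefficient of $\Psi_{k+l}$, namely $a_{k+l}^i=(k+l-2)!(i+k+l+2)$, the ratio is exactly $\tfrac{(k-2)!(l-2)!(l-k)}{(k+l-2)!}$, i.e. the constant claimed in the proposition.

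The main obstacle is to upgrade this leading-order coincidence to an honest equality of symmetries. My preferred route is induction on $k+l$ using the recursive definition $\Psi_k=\ad_{+1}^{k-2}(\Psi_2)=\{\Psi_1,\Psi_{k-1}\}$ for $k\geq 3$ together with the Jacobi identity
$$\{\Psi_k,\Psi_l\}=\{\Psi_1,\{\Psi_{k-1},\Psi_l\}\}-\{\Psi_{k-1},\{\Psi_1,\Psi_l\}\}.$$
The inner brackets $\{\Psi_1,\Psi_m\}$ are handled by a secondary induction on $m$ seeded by $\{\Psi_1,\Psi_2\}=\Psi_3$; at each step the leading-coefficient calculation above forces the constants to agree, and the difference $\{\Psi_k,\Psi_l\}-\tfrac{(k-2)!(l-2)!(l-k)}{(k+l-2)!}\Psi_{k+l}$ is then a homogeneous symmetry of weight $k+l\geq 0$ with vanishing leading $r$-component. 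Since every symmetry of nonnegative weight constructed so far in $\sym(\tilde{\mathcal{E}}^-)$ carries a nonzero leading $a$-coefficient by construction, this residual symmetry must vanish, completing the induction. The edge cases $k=0$ or $l=0$ (after the sign change $\Psi_0\mapsto-\Psi_0$) and the diagonal $k=l$ (where the prefactor vanishes by antisymmetry) are verified directly from the explicit formula for $\psi_0^i$ and from the antisymmetry of the Jacobi bracket, and the convention $s!=1$ for $s<0$ renders the formula unambiguous in the boundary factorials $(-1)!$, $(-2)!$.
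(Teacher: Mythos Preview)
Your proposal is correct and matches the paper's approach: the paper's entire proof is the leading-coefficient asymptotic match you carry out in your first paragraph (the paper writes only ``Using this estimate, we easily prove the following''), and your Jacobi-identity induction supplies rigor the paper leaves implicit. One small technical point: the induction should run on $\min(k,l)$ rather than on $k+l$, since after applying Jacobi and the definition $\{\Psi_1,\Psi_l\}=\Psi_{l+1}$ you are left with $\{\Psi_{k-1},\Psi_{l+1}\}$, whose index sum is unchanged but whose minimum has dropped.
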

Of course, its is natural to rescale the elements~$\Psi_k$
by~$\Psi_k\mapsto\Psi_k/(k-2)!$ and obtain the commutators
\begin{equation*}
  \{\Psi_k,\Psi_l\} = (l-k)\Psi_{k+l}.
\end{equation*}
Thus, for the new~$\Psi_k$ the estimate is
\begin{equation*}
  \Ex_{\Psi_k}= \dots+\left((i+k+2)r_{i+k} + xr_{i+k-1}^{(1)} +
  o(i+k-2)\right)\frac{\partial}{\partial r_i} + \dots,
\end{equation*}

We now complete the sequence of symmetries~$\{\Theta_{-k}(A)\}$, $k\leq 0$, by
setting
\begin{equation}\label{eq:17}
  \Theta_k(A)=-\frac{1}{3}\{\Psi_{k+3},\Theta_{-3}(A)\},\qquad k\geq -2.
\end{equation}
One has~$\abs{\Theta_k(A)} = k$, and elements of~$\sym(\tilde{\mathcal{E}}^-)$
are distributed along the weights as indicated in Table~\ref{tab:distr-}.
\begin{table}[bct]
  \centering
  \begin{tabular}{c|cccccccccc}\hline
    Weights:
    &\dots&$-l$&\dots&$-2$&$-1$&$0$&$1$&\dots&$k$&\dots\\
    \hline
    &&&&&&$\Psi_0$&$\Psi_1$&\dots&$\Psi_k$&\dots\\
    &\dots
          &$\Theta_{-l}(A)$&\dots&$\Theta_{-2}(A)$
                          &$\Theta_{-1}(A)$&$\Theta_0(A)$&
                                                           $\Theta_1(A)$&\dots&$\Theta_k(A)$&\dots\\
    &&&&&&$\Upsilon_0(B)$&&&&\\
    \hline
  \end{tabular}
  \caption{Distribution of~$\sym(\tilde{\mathcal{E}}^-)$ along the weights}
  \label{tab:distr-}
\end{table}

The coefficients of invisible symmetries are
\begin{align*}
  \phi_1^{\mathrm{inv}} &= A,\\
  \phi_2^{\mathrm{inv}} &= -xA',\\
  \phi_3^{\mathrm{inv}} &= -uA + \frac{1}{2}x^2A'',\\
  \phi_4^{\mathrm{inv}} &= -r_1A' +uxA'' -\frac{1}{6}x^3A''',
\end{align*}
while for~$i\geq 5$ we have the estimates
\begin{equation*}
  \phi_i^{\mathrm{inv}} = -A'r_{i-3} +xA''r_{i-4} + o(i-5).
\end{equation*}
Thus,
\begin{multline*}
  \Ex_{\Theta_{-k}(A)} = A\frac{\partial}{\partial r_{k-2}} + \dots +
  \phi_{i-k+3}^{\mathrm{inv}}\frac{\partial}{\partial r_i} + \dots\\
  = A\frac{\partial}{\partial r_{k-2}} + \dots + \left(-A'r_{i-k} +xA''r_{i-k-1}
    + o(i-k-2)\right)\frac{\partial}{\partial r_i} + \dots
\end{multline*}
for~$k\geq 3$.

Using the obtained estimates for~$\Psi_k$ and~$\Theta_{-3}(A)$, we get
\begin{multline*}
  \Ex_{\Theta_k(A)} = -\frac{1}{3}[\Ex_{\Psi_{k+3}},\Ex_{\Theta_{-3}(A)}] \\=
  \left[\dots + 
    \left((i+k+2)r_{i+k} + xr_{i+k-1}^{(1)} +
      o(i+k-2)\right)\frac{\partial}{\partial r_i}+ \dots\right.,\\
  \left.\dots + \left(-A'r_{i-3} +xA''r_{i-4} +
      o(i-5)\right)\frac{\partial}{\partial r_i}+ \dots\right]\\
  =\dots+(-A'r_{i+k}+xA''r_{i+k-1} + o(i+k-2))\frac{\partial}{\partial r_i} +\dots
\end{multline*}
for all~$k\geq -2$. These estimates lead directly to
\begin{proposition}
  One has
  \begin{equation*}
    \{\Psi_k,\Theta_l(A)\} = l\,\Theta_{k+l}(A)
  \end{equation*}
  for all~$k\geq 0$\textup{,} $l\in\mathbb{Z}$.
\end{proposition}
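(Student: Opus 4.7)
The plan is to combine an asymptotic-matching calculation with a short structural argument, in the spirit of the earlier commutator computations in this subsection. Using the leading expansions
\begin{align*}
\Ex_{\Psi_k}&=\cdots+\bigl((i+k+2)r_{i+k}+xr_{i+k-1}^{(1)}+o(i+k-2)\bigr)\frac{\partial}{\partial r_i}+\cdots,\\
\Ex_{\Theta_l(A)}&=\cdots+\bigl(-A'r_{i+l}+xA''r_{i+l-1}+o(i+l-2)\bigr)\frac{\partial}{\partial r_i}+\cdots
\end{align*}
established just above (and, for $l\leq -3$, extracted from the explicit invisible coefficients $\phi_i^{\mathrm{inv}}$), I would compute the $\partial/\partial r_i$-component of $[\Ex_{\Psi_k},\Ex_{\Theta_l(A)}]=\Ex_{\Psi_k}(f_i^{\Theta_l(A)})-\Ex_{\Theta_l(A)}(f_i^{\Psi_k})$. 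The arithmetic identities $-(i+l+k+2)+(i+k+2)=-l$ on the $A'r_{i+k+l}$-terms and $(i+l+k+1)-(i+k+1)=l$ on the $xA''r_{i+k+l-1}$-terms (with the $xA'r_{i+k+l-1}^{(1)}$-contributions cancelling exactly) produce precisely $l$ times the leading estimate for $\Ex_{\Theta_{k+l}(A)}$.

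Next, I would promote this asymptotic match to an equality. The bracket $\{\Psi_k,\Theta_l(A)\}$ is a homogeneous symmetry of weight $k+l$ that depends linearly on $A$ and vanishes when $A=0$; by Table~\ref{tab:distr-} the only weight-$(k+l)$ family with these properties is $\Theta_{k+l}(\mathcal L A)$ for some linear operator $\mathcal L$ acting on $A$ (the $\Psi_{k+l}$-branch is excluded by vanishing at $A=0$, and the $\Upsilon_0$-branch only occurs at weight zero and is $B$-dependent). The asymptotic match then forces $\mathcal L A=lA$ up to a possible constant, and that constant is pinned down by matching shadows: using the explicit formulas for $\theta_l(A)$ and for $\phi_i^{\mathrm{inv}}$ together with the defining equations~\eqref{eq:19}--\eqref{eq:23}, one verifies that the shadow of the bracket coincides with that of $l\Theta_{k+l}(A)$.

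The main technical obstacle is the bookkeeping of the subleading terms and the possible constant ambiguity in~$A$. A cleaner alternative, which avoids uniqueness arguments, is induction on $k$ via the Jacobi identity. The base case $l=-3$ is the very definition~\eqref{eq:17} of $\Theta_{k-3}(A)$; the base case $k=0$ is the scaling property of $\Psi_0$; the base cases $k=1,2$ are handled by the direct asymptotic computation above. For $k\geq 3$, one writes $\Psi_k=\frac{1}{k-2}\{\Psi_1,\Psi_{k-1}\}$ and applies
\[
\{\Psi_k,\Theta_l(A)\}=\tfrac{1}{k-2}\bigl(\{\Psi_1,\{\Psi_{k-1},\Theta_l(A)\}\}-\{\Psi_{k-1},\{\Psi_1,\Theta_l(A)\}\}\bigr),
\]
and the inductive hypothesis collapses the right-hand side arithmetically to $\frac{l(k-1+l)-l(1+l)}{k-2}\Theta_{k+l}(A)=l\Theta_{k+l}(A)$, completing the proof.
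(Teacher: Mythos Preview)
Your core approach---matching the leading $\partial/\partial r_i$-asymptotics of $[\Ex_{\Psi_k},\Ex_{\Theta_l(A)}]$ against those of $l\,\Ex_{\Theta_{k+l}(A)}$---is precisely the paper's; its proof is the single line ``These estimates lead directly to'' the proposition, leaving implicit the uniqueness/completeness step you spell out in your second paragraph. Your Jacobi-identity alternative is sound and mirrors the device the paper uses in the very next proposition, but note that it does not truly bypass the uniqueness argument: the base cases $k=1,2$ for arbitrary $l$ still rest on asymptotics-plus-uniqueness unless you carry out those brackets exactly from the closed formulas for $\psi_1^i$, $\psi_2^i$ and the $\mathcal{Y}_+$-recursion defining the components of $\Theta_l(A)$.
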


Finally, we have
\begin{proposition}
  \label{sec:nonl-symm-negat-prop4}
  One has
  \begin{equation*}
    \{\Theta_k(A),\Theta_l(\tilde{A})\} =\Theta_{k+l}(A\tilde{A}' - A'\tilde{A})
  \end{equation*}
  for all~$k$\textup{,} $l\in\mathbb{Z}$ and smooth
  functions~$A=A(t)$\textup{,} $\tilde{A}=\tilde{A}(t)$.
\end{proposition}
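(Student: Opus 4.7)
The plan is to reduce the identity to the single base case $\{\Theta_{-3}(A),\Theta_{-3}(\tilde A)\}=\Theta_{-6}(A\tilde A'-A'\tilde A)$ via the Jacobi identity and the already established formulas $\{\Psi_k,\Theta_l(A)\}=l\Theta_{k+l}(A)$ together with the defining relation $\Theta_k(A)=-\tfrac{1}{3}\{\Psi_{k+3},\Theta_{-3}(A)\}$ for $k\geq-2$. Specifically, for any $k\geq-2$ and $l\in\mathbb Z$ Jacobi gives
\[
\{\Theta_k(A),\Theta_l(\tilde A)\}
=-\tfrac{1}{3}\{\Psi_{k+3},\{\Theta_{-3}(A),\Theta_l(\tilde A)\}\}
+\tfrac{l}{3}\{\Theta_{-3}(A),\Theta_{k+l+3}(\tilde A)\},
\]
and if the identity is known for every bracket $\{\Theta_{-3}(A),\Theta_m(\tilde A)\}$, the trivial arithmetic check $-\tfrac{l-3}{3}+\tfrac{l}{3}=1$ shows that the right hand side collapses to $\Theta_{k+l}(A\tilde A'-A'\tilde A)$. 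The case $l\geq-2$ with $k\leq-3$ then follows by antisymmetry, while the case $k,l\leq-3$ is handled by observing that the commutator of two invisible symmetries is itself invisible; its weight being $k+l\leq-6$, uniqueness of $\Theta_{-j}(\,\cdot\,)$ forces it to have the form $\Theta_{k+l}(\Omega(A,\tilde A))$ for some bilinear bidifferential $\Omega$, and applying $\ad_{\Psi_n}$ with $n$ large enough to translate both factors into the already-handled mixed regime identifies $\Omega=A\tilde A'-A'\tilde A$ (using $k+l\neq0$ to divide out).

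The inner identity $\{\Theta_{-3}(A),\Theta_m(\tilde A)\}=\Theta_{m-3}(A\tilde A'-A'\tilde A)$ for $m\geq-2$ is obtained by the same Jacobi manoeuvre applied to $\Theta_m(\tilde A)=-\tfrac{1}{3}\{\Psi_{m+3},\Theta_{-3}(\tilde A)\}$, which reduces it to the base case $\{\Theta_{-3}(A),\Theta_{-3}(\tilde A)\}$ plus controlled $\Psi$-actions; for $m\leq-4$ both arguments are invisible and the argument from the previous paragraph applies.

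All cases thus bottom out at a single computation: the coefficient of $\partial/\partial r_4$ of $[\Ex_{\Theta_{-3}(A)},\Ex_{\Theta_{-3}(\tilde A)}]$. Both sides of the target identity are invisible of weight $-6$, so by the uniqueness proposition proved above it suffices to match $\phi_1^{\mathrm{inv}}$, i.e.\ the entry at $\partial/\partial r_4$. Since $\Ex_{\Theta_{-3}(A)}$ is vertical and invisible, it annihilates $x$, $u$, and every derivative of $\tilde A$, so applied to $\phi_4^{\mathrm{inv}}(\tilde A)=-r_1\tilde A'+ux\tilde A''-\tfrac{1}{6}x^3\tilde A'''$ only the $-r_1\tilde A'$ term survives, giving $-A\tilde A'$ via $\Ex_{\Theta_{-3}(A)}(r_1)=A$; the symmetric contribution from $\Ex_{\Theta_{-3}(\tilde A)}$ applied to $\phi_4^{\mathrm{inv}}(A)$ is $-A'\tilde A$, and after the subtraction prescribed by the commutator we obtain exactly $A\tilde A'-A'\tilde A$. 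The main obstacle is organising the Jacobi reductions so that the induction closes without circularity: in the range $k,l\leq-3$ one must $\Psi_n$-translate into the mixed regime and exploit $k+l\neq 0$, while in the mixed range the recursion has to be run simultaneously for the inner bracket $\{\Theta_{-3}(A),\Theta_m(\tilde A)\}$---this bookkeeping is where care is needed, though none of it involves any genuinely new computation beyond the one scalar identity above.
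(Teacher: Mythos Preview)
Your overall architecture is sound in spirit, and the base-case computation at $\partial/\partial r_4$ is the right idea, but the Jacobi reduction in your second paragraph does not close. When you expand $\{\Theta_{-3}(A),\Theta_m(\tilde A)\}$ for $m\ge-2$ via $\Theta_m(\tilde A)=-\tfrac13\{\Psi_{m+3},\Theta_{-3}(\tilde A)\}$, the Jacobi identity actually gives
\[
\{\Theta_{-3}(A),\Theta_m(\tilde A)\}
=-\{\Theta_m(A),\Theta_{-3}(\tilde A)\}+2\,\Theta_{m-3}(A\tilde A'-A'\tilde A),
\]
because $\{\Theta_{-3}(A),\Psi_{m+3}\}=3\,\Theta_m(A)$ contributes a term $\{\Theta_m(A),\Theta_{-3}(\tilde A)\}$ that is \emph{not} a $\Psi$-action on something already known. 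You thus obtain only the symmetric combination $\{\Theta_{-3}(A),\Theta_m(\tilde A)\}+\{\Theta_m(A),\Theta_{-3}(\tilde A)\}$; swapping $A\leftrightarrow\tilde A$ and using antisymmetry of the bracket reproduces the same relation and gives no second equation. Consequently the inner identity is not determined from the base case by this manoeuvre. This gap then makes the rest of the scheme circular: your treatment of $k,l\le-3$ translates via $\ad_{\Psi_n}$ into the ``mixed regime'' (one index $\ge-2$, the other $\le-3$), but the mixed regime is precisely the inner identity you have not established.

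The paper avoids this by not attempting to bottleneck everything through a single base case: it handles the whole range $k\le-3$ or $l\le-3$ directly from the asymptotic estimates for the coefficients of $\Ex_{\Theta_k(A)}$ (the leading terms $-A'r_{i+k}+xA''r_{i+k-1}+\cdots$), which makes the commutator computation routine in that range. Only the residual finite block $k,l\ge-2$ is dispatched by a single Jacobi step, and that step lands in brackets with one index $\le-3$, i.e.\ in the range already covered by the estimates. Your $\partial/\partial r_4$ check is essentially the $k=l=-3$ instance of that estimate argument; what is missing is its extension (or an appeal to the estimates) to general $k\le-3$ or $l\le-3$. As a minor aside, recheck the sign in your base case: $[X,Y](r_4)=X(-r_1\tilde A')-Y(-r_1A')=-A\tilde A'+A'\tilde A$, the negative of what you wrote.
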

\begin{proof}
  The result easily follows from the above estimates when~$k$ or~$l\leq-3$,
  but the method does not work when both~$k$ and~$l>-3$. Nevertheless, one has
  in this case
  \begin{multline*}
    \{\Theta_k(A),\Theta_l(\tilde{A})\} =
    -\frac{1}{3}\{\{
    \Psi_{k+3},\Theta_{-3}(A)\},\Theta_l(\tilde{A})\}
    =
    -\frac{1}{3}(\{\{\Psi_{k+3},\Theta_l(\tilde{A})\},\Theta_{-3}(A)\}\\
    + \{\Psi_{k+3},\{\Theta_{-3}(A),\Theta_l(\tilde{A})\}\}) =
    -\frac{1}{3}(l\,\{\Theta_{k+l+3}(\tilde{A}),\Theta_{-3}(A)\} +
    \{\Psi_{k+3},\Theta_{l-3}(A\tilde{A}' -    A'\tilde{A})\})\\
    = -\frac{1}{3}\,(-l\,\Theta_{k+l}(A\tilde{A}' - A'\tilde{A}) +
    (l-3)\,\Theta_{k+l}(A\tilde{A}' - A'\tilde{A})) = \Theta_{k+l}(A\tilde{A}'
    -    A'\tilde{A}), 
  \end{multline*}
  and this finishes the proof.
\end{proof}

Thus we have the result similar to Theorem~\ref{sec:lie-algebra-struct-theo1}:

\begin{theorem}\label{sec:lie-algebra-struct-theo-2}
  The Lie algebra~$\sym(\tilde{\mathcal{E}}^-)$ is isomorphic to the direct
  sum
  $\left(\mathfrak{W}^{+} \ltimes \mathfrak{L}[t]\right) \oplus
  \mathfrak{V}[y]$ of the semi-direct product of the positive part
  \begin{equation*}
    \mathfrak{W}^{+} = \left\{
        z^{k+1} \,\frac{\partial}{\partial z} \,\,\vert\,\, k \in \mathbb{N}
        \cup \{0\}\, 
      \right\}
    \end{equation*}
    of the Witt algebra with
    \begin{equation*}
      \mathfrak{L}[t] = \left\{
        z^{m} \,A(t)\,\frac{\partial}{\partial t} \,\,\vert\,\, m \in
        \mathbb{Z},\,\, 
        A \in C^{\infty}(\mathbb{R})\,
      \right\},
    \end{equation*}
    where the vector fields $z^{k+1}\partial/\partial z$ act naturally on
    $\mathfrak{L}[t]$\textup{,} and
    \begin{equation*}
      \mathfrak{V}[y] = \left\{
        B(y)\,\frac{\partial}{\partial y} \,\,\vert\,\, B \in
        C^{\infty}(\mathbb{R})\, 
      \right\}
    \end{equation*}
    is the Lie algebra of vector fields on the line.
\end{theorem}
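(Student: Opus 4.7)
The plan is to construct an explicit Lie algebra isomorphism via the correspondence
\begin{equation*}
  \Psi_k\longmapsto z^{k+1}\frac{\partial}{\partial z},\qquad \Theta_k(A)\longmapsto z^k A(t)\frac{\partial}{\partial t},\qquad \Upsilon_0(B)\longmapsto B(y)\frac{\partial}{\partial y},
\end{equation*}
for $k\geq 0$ in the first slot and $k\in\mathbb{Z}$ in the second, and then to verify that every bracket on the left-hand side, as established in the foregoing propositions of Section~\ref{sec:lie-algebra-struct-1}, matches the corresponding bracket of vector fields on the right-hand side.

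The Witt part is immediate: after the rescaling $\Psi_k\mapsto\Psi_k/(k-2)!$ already carried out, the relation $\{\Psi_k,\Psi_l\}=(l-k)\Psi_{k+l}$ coincides with $[z^{k+1}\partial_z,z^{l+1}\partial_z]=(l-k)z^{k+l+1}\partial_z$. The loop part follows from Proposition~\ref{sec:nonl-symm-negat-prop4}, since $\{\Theta_k(A),\Theta_l(\tilde A)\}=\Theta_{k+l}(A\tilde A'-A'\tilde A)$ is precisely $[z^k A\partial_t,z^l\tilde A\partial_t]=z^{k+l}(A\tilde A'-A'\tilde A)\partial_t$. The semi-direct action is recorded by $\{\Psi_k,\Theta_l(A)\}=l\,\Theta_{k+l}(A)$, which equals $[z^{k+1}\partial_z,z^l A\partial_t]=l z^{k+l}A\partial_t$ in the target algebra.

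It remains to show that $\mathfrak{V}[y]$ splits off as a direct summand. For this I would verify that $\{\Upsilon_0(B),\Psi_k\}=0$ and $\{\Upsilon_0(B),\Theta_k(A)\}=0$ for every admissible $k$, and that $\{\Upsilon_0(B),\Upsilon_0(\tilde B)\}=\Upsilon_0(B\tilde B'-\tilde B B')$. The local projections of the first two brackets vanish by Table~\ref{tab:loc-sym}, and the nonlocal components $\upsilon_0^i(B)=Br_{i,y}$ act on $r_i$ only through $\partial_y$, so when commuted against a symmetry whose coefficients are polynomial in the $r_j^{(l)}$ (where $l$ counts $t$-derivatives) and free of $y$, the bracket reduces to a $y$-independent identity which vanishes level by level; the third bracket is the standard bracket of vector fields on the line.

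The main obstacle is \emph{completeness}, namely that the listed symmetries span $\sym(\tilde{\mathcal{E}}^-)$. I would argue by weight: fix an integer $w$ and analyse Equation~\eqref{eq:19} for shadows of weight $w$, combined with the lifting constraints \eqref{eq:20}--\eqref{eq:23}. An order-by-order inspection in the internal coordinates of $\tilde{\mathcal{E}}^-$ shows that, modulo invisible symmetries, every shadow of weight $w$ is a linear combination of those of $\Psi_w$ (if $w\geq 0$), $\Theta_w(A)$, and $\Upsilon_0(B)$ (only if $w=0$); the invisible part at weight $-k$, $k\geq 3$, is exhausted by $\Theta_{-k}(A)$ because the recursion $\phi_i^{\mathrm{inv}}=\mathcal{Y}_+(\phi_{i-1}^{\mathrm{inv}})/(i-1)$ is uniquely determined by $\phi_1^{\mathrm{inv}}=A(t)$. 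Combining the bracket computations with this spanning result yields the claimed isomorphism.
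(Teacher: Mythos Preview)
Your approach is essentially the paper's: the theorem is not given a separate proof there either, but is stated as an immediate consequence of the commutator relations established in the preceding propositions, together with the phrase ``Thus we have the result similar to Theorem~\ref{sec:lie-algebra-struct-theo1}''. Your explicit identification $\Psi_k\mapsto z^{k+1}\partial_z$, $\Theta_k(A)\mapsto z^kA\partial_t$, $\Upsilon_0(B)\mapsto B\partial_y$ and the bracket-by-bracket verification is exactly what the paper leaves to the reader.

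You go beyond the paper in two places. First, you spell out the brackets involving $\Upsilon_0(B)$, which the paper never states explicitly for the negative covering; your argument via $y$-independence of the coefficients of $\Psi_k$ and $\Theta_k(A)$ is the right mechanism, though it would be cleaner to observe directly that $\Ex_{\Upsilon_0(B)}$ acts on every internal coordinate as $B\tilde D_y$ and hence commutes with any evolutionary field whose generating section is $y$-free. Second, you flag completeness as the genuine obstacle and sketch a weight-by-weight classification; the paper does not address this point at all, so your proposal is in fact more careful than the original on this score, even if the sketch would still require substantial work to make rigorous.
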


\section{Action of the recursion operators}
\label{sec:acti-recurs-oper}

We discuss here the action of recursion operators in the hierarchies of
nonlocal symmetries described above.

\subsection{Action of the recursion operator to local symmetries and shadows}

The algebra~$\sym(\mathcal{E})$ admits a recursion
operator~$\hat{\chi} = \mathcal{R}_{+}(\chi)$ defined by the system
\begin{equation}\label{recursion_operator}
  \begin{array}{rcl}
    D_t (\hat{\chi}) &=&
                         u_y^{-1}\,\big(u_y\,D_x(\chi)-
                         u_x\,D_y(\chi)+(u_xu_{xy}-u_yu_{xx}) \hat{\chi}\big),
                             \\[2mm]
    D_x (\hat{\chi})&=& u_y^{-1}\,\big(u_{xy}\,\hat{\chi}-D_y(\chi)\big),
  \end{array}
\end{equation}
see~\cite{Morozov2012}.  This means that~$\hat{\chi}$ is a solution
to~\eqref{linearized_eq} whenever~$\chi$ is.  Another recursion
operator~$\chi= \mathcal{R}_{-}(\hat{\chi})$ is given by the system
\begin{equation} \label{inverse_recursion_operator}
  \begin{array}{rcl}
    D_x (\chi) &=& D_t(\hat{\chi}) - u_x\,D_x(\hat{\chi}) +u_{xx}\,\hat{\chi},
    \\[2mm]
    D_y (\chi) &=& - u_y\,D_x(\hat{\chi})+ u_{xy}\,\hat{\chi}.
  \end{array}
\end{equation}
The operators~$\mathcal{R}_+$ and~$\mathcal{R}_-$ are mutually inverse.

The actions of $\mathcal{R}_{+}$ and $\mathcal{R}_{-}$ on $\sym(\mathcal{E})$
may be prolonged to the shadows of nonlocal symmetries from
$\sym(\tilde{\mathcal{E}}^+)$ and $\sym(\tilde{\mathcal{E}}^-)$ if we replace
the derivatives $D_t$, $D_x$ and $D_y$ in~\eqref{recursion_operator}
and~\eqref{inverse_recursion_operator} by $\hat{D}_t$, $\hat{D}_x$ and
$\hat{D}_y$ defined as
\begin{align*}
  \hat{D}_x& = D_x +
             \sum_{j=0}^\infty \hat{D}_y^j
             \left(\frac{1}{u_y}\right)\frac{\partial}{\partial q_1^{(j)}} 
             + \sum_{i=2}^\infty\sum_{j=0}^\infty
             \hat{D}_y^j
             \left(\frac{q_{i-1}^{(1)}}{u_y}\right)\frac{\partial}{\partial 
             q_i^{(j)}} \\
           &+ \sum_{j=0}^\infty \hat{D}_t^j(u_x^2 -
             u_t)\frac{\partial}{\partial r_1^{(j)}} 
             + \sum_{i=2}^\infty \sum_{j=0}^\infty \hat{D}_t^j(u_xr_{i-1,x}-
             r_{i-1,t})\frac{\partial}{\partial r_i^{(j)}}, 
  \\
  \hat{D}_y&=D_y
             + \sum_{i=1}^\infty\sum_{j=0}^\infty
             q_i^{(j+1)}\frac{\partial}{\partial q_i^{(j)}} 
             + \sum_{j=0}^\infty \hat{D}_t^j(u_x
             \,u_y)\frac{\partial}{\partial r_1^{(j)}} 
             + \sum_{i=2}^\infty \sum_{j=0}^\infty
             \hat{D}_t^j(u_yr_{i-1,x})\frac{\partial}{\partial r_i^{(j)}}, 
  \\
  \hat{D}_t&=D_t
             + \sum_{j=0}^\infty
             \hat{D}_y^j\left(\frac{u_x}{u_y}\right)\frac{\partial}{\partial
             q_1^{(j)}} 
             + \sum_{i=2}^\infty\sum_{j=0}^\infty
             \hat{D}_y^j\left(\frac{u_x}{u_y}q_{i-1}^{(1)} - \hat{D}_x\left(
             q_{i-1}^{(0)}\right)\right)\frac{\partial}{\partial q_i^{(j)}} 
  \\
           &+ \sum_{i=1}^\infty \sum_{j=0}^\infty r_i^{(j+1)}
             \frac{\partial}{\partial r_i^{(j)}},
\end{align*}
that is, consider the Whitney product of the coverings
$\tilde{\mathcal{E}}^{+}$ and $\tilde{\mathcal{E}}^{-}$. The results of the
replacement will be also denoted by $\mathcal{R}_{+}$ and $\mathcal{R}_{-}$.

Note that the operators act nontrivially on `vacuum':
\begin{equation*}
  \mathcal{R}_+(0)=\theta_{-2}(A),\qquad \mathcal{R}_-(0)=\upsilon_0(B),
\end{equation*}
which immediately follows from Equations~\eqref{recursion_operator}
and~\eqref{inverse_recursion_operator}; thus the actions are reasonable to
consider modulo~$\theta_{-2}(A)$ for~$\mathcal{R}_+$ and~$\upsilon_0(B)$
for~$\mathcal{R}_-$. Taking into account this remark, we have the following

\begin{proposition}
  Modulo images of the trivial symmetry\textup{,} the action of recursion
  operators is of the form
  \begin{align*}
    &\mathcal{R}_+(\theta_i(A))=\begin{cases}
      \alpha_i^+\theta_{i-1}(A),
      &i>-2,\\
      0,&i=-2,
    \end{cases}
      &&\mathcal{R}_-(\theta_i(A))=\alpha_i^-\theta_{i+1}(A),\quad
         i\geq-2,
    \\
    &\mathcal{R}_+(\upsilon_i(B))=\beta_i^+\upsilon_{i+1}(B),\quad i\leq 0,
      &&\mathcal{R}_-(\upsilon_i(B))=
         \begin{cases}
           \beta_i^-\upsilon_{i+1}(B),&i<0,\\
           0,&i=0,
         \end{cases}
    \\
    &\mathcal{R}_+(\psi_i)=\gamma_i^+\psi_{i-1},
      &&\mathcal{R}_-(\psi_i)=\gamma_i^-\psi_{i+1},
         \qquad i\in \mathbb{Z},  
  \end{align*}
  where~$\alpha_i^\pm$\textup{,} $\beta_i^\pm$\textup{,} and~$\gamma_i^\pm$
  are nonzero constants.
\end{proposition}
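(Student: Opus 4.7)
The plan is to reduce the statement to a weight-and-parameter argument, then verify non-vanishing by computing a few explicit base cases and propagating. First, inspection of the defining systems~\eqref{recursion_operator} and~\eqref{inverse_recursion_operator} shows that $\mathcal{R}_+$ is homogeneous of weight $-1$ and $\mathcal{R}_-$ of weight $+1$ in the grading of Section~\ref{sec:weights} (every summand in the two systems is weight-homogeneous, and balancing gives $\abs{\hat\chi}=\abs\chi-1$ and $\abs{\chi}=\abs{\hat\chi}-1$). Consequently, applied to a homogeneous shadow the image is a shadow of shifted weight. Combining this with the classification of shadows at each weight provided by Theorems~\ref{sec:lie-algebra-struct-theo1} and~\ref{sec:lie-algebra-struct-theo-2} (see Tables~\ref{tab:nonloc} and~\ref{tab:distr-}) restricts the possible image drastically.

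Second, I would observe that the coefficients appearing in~\eqref{recursion_operator} and~\eqref{inverse_recursion_operator} depend only on derivatives of $u$ and on $u_y^{-1}$, so the operators commute (up to scalars) with the $A(t)$- and $B(y)$-parametrizations of the generators. More precisely, if $\chi$ depends on its arbitrary function only through pure coefficients (not through $y$- or $t$-differentiations that would mix with the jet variables), then the same is true of $\hat\chi$. This rules out any mixing between the $\theta_i(A)$-, $\upsilon_i(B)$- and $\psi_i$-families, so the image of a generator in one family is proportional to the unique generator of the same family at the shifted weight, i.e.\ a scalar multiple of the claimed $\theta_{i\pm1}(A)$, $\upsilon_{i\pm1}(B)$, or $\psi_{i\pm1}$.

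Third, I would fix the scalars by a base-case calculation and induction. For instance, taking $\chi=\theta_{-2}(A)=A(t)$ in~\eqref{inverse_recursion_operator} yields $D_x(\chi)=A'$ and $D_y(\chi)=u_{xy}A$, which is solved by $\chi=Au_x+A'x=\theta_{-1}(A)$, giving $\alpha_{-2}^-=1$. Similarly $\mathcal{R}_+(\theta_0(A))$ and one or two base cases for $\upsilon_i$ and $\psi_i$ fix the remaining initial constants. The vanishing statements $\mathcal{R}_+(\theta_{-2}(A))=0$ and $\mathcal{R}_-(\upsilon_0(B))=0$ modulo the vacuum ambiguity are immediate: for the former, $D_x(A)=D_y(A)=0$ reduces~\eqref{recursion_operator} to a homogeneous linear system in $\hat\chi$ admitting the solution $\hat\chi=0$, which is the unique choice modulo the kernel of $\mathcal{R}_+$; for the latter, a direct substitution of $\upsilon_0(B)=Bu_y$ into~\eqref{inverse_recursion_operator} similarly yields $\chi=0$ modulo $\mathcal{R}_-(0)=\upsilon_0(\tilde B)$.

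The main obstacle is the systematic verification that the proportionality constants $\alpha_i^\pm,\beta_i^\pm,\gamma_i^\pm$ do not vanish for any $i\in\mathbb Z$. The weight-and-parameter argument shows the image is a scalar multiple of the expected symmetry, but not that the scalar is nonzero. The cleanest way to handle this is by induction on $\abs i$, tracking a single leading coefficient (for example, the coefficient of the highest-weight nonlocal variable in the expansion of the shadow, as in the asymptotic estimates used to prove Propositions~\ref{sec:nonl-symm-posit-prop2} and the analogous ones in Section~\ref{sec:nonl-symmtr-negat}) and showing it is mapped to a nonzero multiple of the next such leading coefficient. The identities obtained along the way will also produce explicit recursions for $\alpha_i^\pm,\beta_i^\pm,\gamma_i^\pm$, from which their non-vanishing is inherited from the base cases.
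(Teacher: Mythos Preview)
Your approach is essentially the paper's: weight homogeneity of $\mathcal{R}_\pm$ plus preservation of the $t$- and $y$-dependence pin down the image up to a scalar. The only real difference is in how non-vanishing of the scalars is argued. You propose an induction on leading coefficients in the asymptotic expansions; the paper instead observes directly that the kernels of $\mathcal{R}_+$ and $\mathcal{R}_-$ (modulo the vacuum image) are precisely $\theta_{-2}(A)$ and $\upsilon_0(B)$, respectively. This is immediate from the defining systems: setting $\hat\chi=0$ in~\eqref{recursion_operator} forces $D_x(\chi)=D_y(\chi)=0$, hence $\chi=A(t)=\theta_{-2}(A)$; setting $\chi=0$ in~\eqref{inverse_recursion_operator} forces $D_x(\hat\chi)=u_y^{-1}u_{xy}\hat\chi$ and the first equation then reduces (via $u_{ty}=u_xu_{xy}-u_yu_{xx}$) to a system solved exactly by $\hat\chi=Bu_y=\upsilon_0(B)$. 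Hence any other shadow cannot be annihilated, and the constants are nonzero without any induction. Your route would work but is unnecessarily laborious.
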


\begin{proof}
  It suffices to notice that the weights of~$\mathcal{R}_+$
  and~$\mathcal{R}_-$ are~$-1$ and~$+1$, respectively, that their action
  (Modulo images of~$0$) does not change the dependence of shadows on~$y$
  and~$t$, and that the only shadows that may be taken to~$0$
  are~$\theta_{-2}(A)$ and~$\upsilon_0(B)$.
\end{proof}

Note that the recursion operators $\mathcal{R}_{+}$ and $\mathcal{R}_{-}$
`glue together' the shadows $\psi_m$ of nonlocal symmetries in coverings
$\tilde{\mathcal{E}}^{+}$ and $\tilde{\mathcal{E}}^{-}$ and `tunnel' from the
series of $\theta_k(A)$ to that of $\upsilon_{j}(B)$, see
Table~\ref{tab:RO-act}.

\begin{table}[bct]
  \centering
  \begin{equation*}
    \xymatrix{
      &\dots\ar@<1ex>[r]^{\mathcal{R}_-}
      &\ar[l]^{\mathcal{R}_+}\psi_{-1}\ar@<1ex>[r]^{\mathcal{R}_-}
      &\ar[l]^{\mathcal{R}_+}\psi_0\ar@<1ex>[r]^{\mathcal{R}_-}
      &\ar[l]^{\mathcal{R}_+}\psi_1\ar@<1ex>[r]^{\mathcal{R}_-}
      &\ar[l]^{\mathcal{R}_+}\dots&\\
      \dots\ar@<1ex>[r]^{\mathcal{R}_-}
      &\ar[l]^{\mathcal{R}_+}\upsilon_{-1}(B)\ar@<1ex>[r]^{\mathcal{R}_-}
      &\ar[l]^{\mathcal{R}_+}\upsilon_0(B)\ar@<1ex>[r]^{\mathcal{R}_-}
      &\ar[l]^{\mathcal{R}_+}0\ar@<1ex>[r]^{\mathcal{R}_-}
      &\ar[l]^{\mathcal{R}_+}\theta_{-2}(A)\ar@<1ex>[r]^{\mathcal{R}_-}
      &\ar[l]^{\mathcal{R}_+}\theta_{-1}(A)\ar@<1ex>[r]^{\mathcal{R}_-}
      &\ar[l]^{\mathcal{R}_+}\dots
    }
  \end{equation*}
  \caption{Action of recursion operators}
  \label{tab:RO-act}
\end{table}

\subsection{Recursion relations for symmetries of the positive covering}
In this section we find an operator that provides an alternative way to
construct elements of~$\sym(\tilde{\mathcal{E}}^{+})$. To this end, we express
$u_x$, $u_y$ from~\eqref{eq:5a}:
\begin{equation}
  \label{eq:5c}
  \begin{array}{lcl}
    u_{x}&=&\dfrac{q_{1,t}}{q_{1,x}},\\[3mm]
    u_{y}&=&\dfrac{1}{q_{1,x}}.
  \end{array}\qquad
\end{equation}
This system is compatible whenever equation
\begin{equation}
  q_{1,xx} = q_{1,t}q_{1,xy}-q_{1,x}q_{1,ty}.
  \label{uhe_eq}
\end{equation}
holds. This equation is known as the universal hierarchy equation,
see~\cite{MartinezAlonsoShabat2002,MartinezAlonsoShabat2004}.  Thus
systems~\eqref{eq:5a} and~\eqref{eq:5c} define a B\"acklund transformation
between~\eqref{eq:1} and~\eqref{uhe_eq}, see~\cite{Morozov2012_SIGMA}.
Substituting~\eqref{eq:5c} to~\eqref{eq:5b} yields
\begin{equation}
  \label{eq:5d}
  \begin{array}{lcl}
    q_{k,t}&=&q_{1,t} q_{k-1,y} - q_{k-1,x},\\[2mm]
    q_{k,x}&=&q_{1,x} q_{k-1,y},
  \end{array}\qquad
\end{equation}
where $k \ge 2$. The compatibility conditions for this system after renaming
$k-1 \mapsto k$ get the form
\begin{equation}
  q_{k,xx} = q_{1,t}q_{k,xy}-q_{1,x}q_{k,ty}, \qquad k \ge 2.
  \label{extra_uhe_eq}
\end{equation}

\begin{proposition}
  Systems~\eqref{eq:5c} and~\eqref{eq:5d} define a B\"acklund
  auto-transformation for the infinite system of {\sc
    pde}s~\eqref{uhe_eq}\textup{,} \eqref{extra_uhe_eq}.
\end{proposition}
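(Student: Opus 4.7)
The plan is to exhibit the claimed Bäcklund auto-transformation as a span of coverings
\[
  \mathcal{E}_\infty \xleftarrow{\tau_1}\tilde X \xrightarrow{\tau_2}\mathcal{E}_\infty,
\]
where $\mathcal{E}_\infty$ denotes the infinite system of PDEs~\eqref{uhe_eq}, \eqref{extra_uhe_eq} and $\tilde X$ is the infinite prolongation of the joint system~\eqref{eq:5c},~\eqref{eq:5d} in the dependent variables $u, q_1, q_2, \ldots$.

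First I would verify the compatibility that makes $\tilde X$ a consistent total space. Cross-differentiating~\eqref{eq:5c} (equating $\partial_y u_x$ and $\partial_x u_y$) yields precisely~\eqref{uhe_eq} for $q_1$; cross-differentiating~\eqref{eq:5d} (equating $\partial_x q_{k,t}$ and $\partial_t q_{k,x}$) yields~\eqref{extra_uhe_eq} for $q_{k-1}$ after the renaming $k-1\mapsto k$. This is precisely the observation recorded in the paragraph immediately preceding the proposition, so it shows that the $(q_1, q_2, \ldots)$-component of any point of $\tilde X$ is automatically a solution of $\mathcal{E}_\infty$.

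Next I would construct the two projections. The projection $\tau_1\colon\tilde X\to\mathcal{E}_\infty$ is the tautological one that forgets $u$: the lifted total derivatives on $\tilde X$ are obtained by supplementing those on $\mathcal{E}_\infty$ with the prescriptions~\eqref{eq:5c}, and this system is integrable in $u$ precisely because~\eqref{uhe_eq} holds, so $\tau_1$ is a one-dimensional covering. The second projection $\tau_2$ is obtained by an index shift: from the data on $\tilde X$ one extracts an alternative UHE-hierarchy tuple $(\tilde q_1, \tilde q_2, \ldots)$, where $\tilde q_1$ is produced from $(u, q_1)$ via the inverse~\eqref{eq:5a} of~\eqref{eq:5c}, and the higher $\tilde q_k$ are generated by the recursion~\eqref{eq:5d} with the seed $q_1$ replaced by $u$. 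Verifying that this new tuple satisfies~\eqref{uhe_eq} and~\eqref{extra_uhe_eq} is again a cross-differentiation computation, identical in structure to the one in the first step.

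The main obstacle will be making $\tau_2$ explicit and proving that the shifted tuple genuinely satisfies~\eqref{uhe_eq}, \eqref{extra_uhe_eq}: a priori the equations for the $\tilde q_k$ take the form of~\eqref{extra_uhe_eq} with coefficients $u_t, u_x$ in place of $\tilde q_{1,t}, \tilde q_{1,x}$, so one has to use~\eqref{eq:5c} together with the UHE for $q_1$ to rewrite these coefficients in the required form. Once this is done, commutativity of the lifted total derivatives on $\tilde X$, and hence the covering property of $\tau_2$, follows from the compatibility check of the first step, and the diagram above realizes the claimed Bäcklund auto-transformation.
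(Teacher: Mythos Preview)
Your first step—cross-differentiating~\eqref{eq:5c} and~\eqref{eq:5d} to obtain~\eqref{uhe_eq} and~\eqref{extra_uhe_eq}, hence exhibiting $\tau_1$ (forgetting~$u$) as a covering—matches the paper exactly. The proposal breaks down at~$\tau_2$. Since~\eqref{eq:5a} and~\eqref{eq:5c} are mutually inverse, the $\tilde q_1$ you obtain from~$u$ via~\eqref{eq:5a} is just~$q_1$ again; and if you then run a recursion of the shape~\eqref{eq:5d} with the seed~$q_1$ replaced by~$u$, the compatibility condition becomes $\tilde q_{k,xx}=u_t\,\tilde q_{k,xy}-u_x\,\tilde q_{k,ty}$, not~\eqref{extra_uhe_eq} with the required coefficients $\tilde q_{1,t}=q_{1,t}$, $\tilde q_{1,x}=q_{1,x}$. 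No identity implied by~\eqref{eq:5c} or~\eqref{uhe_eq} converts $(u_t,u_x)$ into $(q_{1,t},q_{1,x})$—recall that~$u$ satisfies the rdDym equation, not the UHE—so the obstacle you flag in your last paragraph is real and cannot be removed along those lines. The shifted tuple is simply not a solution of~$\mathcal{E}_\infty$, and your~$\tau_2$ is not a covering over it.

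The paper's argument avoids this entirely. Rather than attempting an index shift, it solves~\eqref{eq:5d} algebraically for~$q_{k-1}$, obtaining the explicit inverse transformation
\[
  q_{k-1,x}=-q_{k,t}+\frac{q_{1,t}}{q_{1,x}}\,q_{k,x},\qquad
  q_{k-1,y}=\frac{q_{k,x}}{q_{1,x}},
\]
and verifies by cross-differentiation (using~\eqref{uhe_eq}) that the compatibility condition of this system is again~\eqref{extra_uhe_eq}, now for~$q_k$. Thus both the direct system~\eqref{eq:5c},~\eqref{eq:5d} and its inverse have $\mathcal{E}_\infty$ as their integrability conditions, which is all the proposition asserts. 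The ingredient missing from your plan is precisely this algebraic inversion of~\eqref{eq:5d}.
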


\begin{proof}
  The compatibility conditions of~\eqref{eq:5c}, \eqref{eq:5d} are definitions
  for~\eqref{uhe_eq} and~\eqref{extra_uhe_eq}. From~\eqref{eq:5d} we have the
  inverse transformation
  \begin{equation*}
    \begin{array}{lcl}
      q_{k-1,x}&=&-q_{k,t} +\dfrac{q_{1,t}}{q_{1,x}} \, q_{k,x},\\[3mm]
      q_{k-1,y}&=&\dfrac{q_{k,x}}{q_{1,x}},
    \end{array}\qquad
  \end{equation*}
  whose compatibility conditions also coincide with~\eqref{extra_uhe_eq}.
\end{proof}

\begin{corollary}
  The linearization
  \begin{eqnarray}
    D_t(\hat{\chi}_1)
    &=& u_y^{-2} \,\left(u_y \,D_x(\chi_0) - u_x \,D_y(\chi_0)\right),
        \label{lifted_recursion_operator_1}
    \\
    D_x(\hat{\chi}_1) &=& -u_y^{-2} \,D_y(\chi_0),
                          \label{lifted_recursion_operator_2}
    \\
    D_t(\hat{\chi}_k) &=& q_{1,t} D_y(\chi_{k-1})+q_{k-1,y} D_t(\chi_1)-
                          D_x(\chi_{k-1}),
                          \label{lifted_recursion_operator_3}
    \\
    D_x(\hat{\chi}_k) &=& q_{1,x} D_y(\chi_{k-1})+q_{k-1,y} D_x(\chi_1)
                          \label{lifted_recursion_operator_4}
  \end{eqnarray}
  of~\eqref{eq:5a} and~\eqref{eq:5d} defines a recursion operator
  \begin{equation*}
    \mathcal{Q}((\chi_0,\chi_1,\chi_2, \dots , \chi_k, \dots))
    =(\chi_0,\hat{\chi}_1,\hat{\chi}_2, \dots , \hat{\chi}_k, \dots)
  \end{equation*}
  for $\sym(\tilde{\mathcal{E}}^{+})$.
\end{corollary}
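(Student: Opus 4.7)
The plan is to derive this corollary as a direct consequence of the preceding proposition, combined with the general principle from Section~\ref{sec:bacl-transf-recurs} that linearizing a Bäcklund auto-transformation produces a recursion operator on generating sections of symmetries. The first observation would be that equations~\eqref{lifted_recursion_operator_1}--\eqref{lifted_recursion_operator_4} are nothing but the linearization of the systems~\eqref{eq:5c} and~\eqref{eq:5d}; the previous proposition has already identified these as a Bäcklund auto-transformation for the infinite system~\eqref{uhe_eq}, \eqref{extra_uhe_eq}. The abstract framework then guarantees a correspondence between generating sections in the associated tangent covering, and the task reduces to transporting this correspondence across the Bäcklund diagram between the rdDym equation and the universal hierarchy, so that it becomes an endomorphism of $\sym(\tilde{\mathcal{E}}^+)$.

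Second, I would check well-definedness of $\mathcal{Q}$ componentwise. For $\hat{\chi}_1$, the consistency condition $D_x D_t(\hat{\chi}_1) = D_t D_x(\hat{\chi}_1)$ obtained by cross-differentiating~\eqref{lifted_recursion_operator_1} and~\eqref{lifted_recursion_operator_2} should reduce, after using the defining relations~\eqref{eq:5a} to eliminate $q_{1,x}$ and $q_{1,t}$, to the linearized rdDym equation $\tilde{\ell}_{\mathcal{E}}(\chi_0)=0$, which is part of the hypothesis on the input shadow. For $k \geq 2$ I would proceed by induction: assuming $\hat{\chi}_1,\dots,\hat{\chi}_{k-1}$ have been consistently constructed, the compatibility between~\eqref{lifted_recursion_operator_3} and~\eqref{lifted_recursion_operator_4} should follow from the shadow equations~\eqref{eq:12}, \eqref{eq:13} satisfied by $\chi_{k-1}$ together with the covering relations~\eqref{eq:5d}.

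Third, I would verify that the resulting sequence $(\chi_0,\hat{\chi}_1,\hat{\chi}_2,\dots)$ actually satisfies the shadow defining equations~\eqref{eq:10}--\eqref{eq:13} of $\tilde{\mathcal{E}}^+$, not merely those of the universal-hierarchy side. This is the step I expect to be the main obstacle: the Bäcklund-transformation argument naturally delivers compatibility on the $q$-side, while~\eqref{eq:10}--\eqref{eq:13} are written in terms of $u_x$, $u_y$, $u_{xy}$ and so on, so the match between the two sets of conditions must be checked by substituting~\eqref{eq:5c} throughout and reorganizing terms. Once the shadow-level compatibility is established, the extension to a full element of $\sym(\tilde{\mathcal{E}}^+)$ is guaranteed by the same kind of reasoning used in the proofs of Propositions~\ref{sec:lifts-local-symm-prop-2} and~\ref{sec:lifts-local-symm-prop5}, where a shadow determined by recursion in $k$ automatically provides the full list of nonlocal components.
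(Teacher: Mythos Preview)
Your proposal is correct and follows exactly the route the paper intends: the corollary is stated without proof in the paper, as an immediate consequence of the preceding proposition together with the general principle from Section~\ref{sec:bacl-transf-recurs} that the linearization of a B\"acklund auto-transformation yields a recursion operator. Your three-step outline (identify the linearization, check compatibility componentwise, verify the output satisfies~\eqref{eq:10}--\eqref{eq:13}) simply spells out the details the paper leaves implicit; in particular, your observation that the shadow $\chi_0$ is unchanged by~$\mathcal{Q}$ matches the paper's remark immediately after the corollary that $\xi$ and $\mathcal{Q}(\xi)$ differ only by an invisible symmetry.
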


Note that symmetries~$\xi$ and~$\mathcal{Q}(\xi)$ have the same shadows and
consequently differ by an invisible symmetry. Thus, at first glance, the
recursion operator~$\mathcal{Q}$ seems to be useless, but this is not the
case: it provides an alternative way for lifting shadows to nonlocal
symmetries in $\tilde{\mathcal{E}}^{+}$. Namely, take a local symmetry or a
shadow $\chi_0$, then~\eqref{lifted_recursion_operator_1},
\eqref{lifted_recursion_operator_2} give $\chi_1$,
applying~\eqref{lifted_recursion_operator_3},
\eqref{lifted_recursion_operator_4} with $k=2$ to $\chi_1$ gives $\chi_2$,
etc., applying~\eqref{lifted_recursion_operator_3},
\eqref{lifted_recursion_operator_4} with $k=m$ to $\chi_{m-1}$ gives $\chi_m$,
etc.

\begin{proposition}
  The following relations are valid\textup{:}
  \begin{align*}
    &\mathcal{Q}(\psi_0) = \psi_0^1,
    &&\mathcal{Q}(\psi^k_0) = \psi_0^{k+1},
    \\
    &\mathcal{Q}(\theta_{-j}(A)) = \theta_{-j}^1(A),
    &&
       \mathcal{Q}(\theta_{-j}^k(A)) = \theta_{-j}^{k+1}(A)
  \end{align*}
  for all $k \ge 1$ and~$j =0$, $1$, $2$. One also has
  \begin{equation*}
    \mathcal{Q}(\upsilon_0(B)) = \upsilon_0^1(B),
    \qquad
    \mathcal{Q}(\upsilon_0^k(B)) = \upsilon_0^{k+1}(B).
  \end{equation*}
\end{proposition}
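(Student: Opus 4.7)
The plan is a straightforward case-by-case verification based on the explicit formulas for the lift components established earlier (in the proof of Proposition~\ref{sec:lifts-local-symm-prop-2} and the definitions following it), combined with the defining relations of the covering~$\tilde{\mathcal{E}}^+$. Since the operator~$\mathcal{Q}$ is characterised by the pair of equations \eqref{lifted_recursion_operator_1}--\eqref{lifted_recursion_operator_2} at the first step and by \eqref{lifted_recursion_operator_3}--\eqref{lifted_recursion_operator_4} at every subsequent step, the task amounts to showing that the proposed right-hand sides satisfy these two equations after substituting the covering relations $q_{1,x}=1/u_y$, $q_{1,t}=u_x/u_y$, and more generally $q_{k,x}=q_{k-1,y}/u_y$, $q_{k,t}=(u_x/u_y)\,q_{k-1,y}-q_{k-1,x}$ from \eqref{eq:5a}--\eqref{eq:5b}.

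First I handle the base step: for each $\chi_0\in\{\psi_0,\,\theta_{-j}(A),\,\upsilon_0(B)\}$ I compute the total derivatives $D_x(\chi_0)$, $D_y(\chi_0)$, $D_t(\chi_0)$ directly (using $u_{ty}=u_xu_{xy}-u_yu_{xx}$ whenever $u_t$ appears, as it does for $\theta_0(A)$), substitute into the right-hand sides of \eqref{lifted_recursion_operator_1}--\eqref{lifted_recursion_operator_2}, and match against $D_x$ and $D_t$ of the proposed $\chi_1$. The case $\chi_0=\theta_{-2}(A)=A$ is immediate because $D_x(A)=D_y(A)=0$ forces $\hat{\chi}_1=0=\theta_{-2}^1(A)$. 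For $\chi_0=\theta_{-1}(A)=Au_x+A'x$ one checks directly that $D_x(Aq_{1,x})=-Au_{xy}/u_y^2$ coincides with $-u_y^{-2}D_y(\chi_0)$, and similarly for $D_t$; the remaining base cases are analogous one- or two-line calculations of the kind already illustrated in the proof of Proposition~\ref{sec:lifts-local-symm-prop-2}.

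For the inductive step I assume $\chi_{k-1}$ equals the prescribed $(k-1)$st lift component and, using the covering relations above, rewrite $D_x$ and $D_t$ of the predicted $\chi_k$ so that they factor through $D_y(\chi_{k-1})$, $D_x(\chi_{k-1})$, $D_t(\chi_{k-1})$, and the corresponding derivatives of $\chi_1$. The resulting identity is then precisely \eqref{lifted_recursion_operator_3}--\eqref{lifted_recursion_operator_4}. For $\theta_{-2}(A)$ both sides vanish trivially; for $\theta_{-1}(A)=Aq_{k,x}$ and $\psi_0^k=kq_k+xq_{k,x}$ the induction reduces to a single application of $q_{k,x}=q_{k-1,y}/u_y$; for the $\upsilon_0(B)$ family the calculation is cleanest because $\upsilon_0(B)=Bu_y$ depends only on $y$-derivatives; and for $\theta_0(A)$ the piecewise definition $\theta_0^i(A)=\theta_{-1}(A)q_{i,x}-Aq_{i-1,x}$ ($i>1$) contributes the correction term $-Aq_{i-1,x}$ that cancels exactly one of the terms produced by \eqref{lifted_recursion_operator_3}.

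The main obstacle is the case $j=0$: the presence of $u_t$ in $\theta_0(A)=Au_t+A'(xu_x-u)+\tfrac12A''x^2$ forces the repeated use of the defining equation $u_{ty}=u_xu_{xy}-u_yu_{xx}$ of~$\mathcal{E}$, and the piecewise form of $\theta_0^i(A)$ requires treating the transitions $k=1\to 2$ and $k\geq 2\to k+1$ separately. Keeping track of the $A'$, $A''$ contributions and of cancellations in the style already displayed in the proof of Proposition~\ref{sec:lifts-local-symm-prop-2} is the only non-routine part of the argument; everything else is bookkeeping with the covering relations.
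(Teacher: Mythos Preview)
Your proposal is correct and follows essentially the same approach as the paper, which simply states that the proof ``is fulfilled along the same lines as that of Proposition~\ref{sec:lifts-local-symm-prop-2}.'' Your case-by-case verification using the explicit lift components and the covering relations~\eqref{eq:5a}--\eqref{eq:5b} is exactly what this reference intends; the only minor slip is that in the inductive step the defining equations \eqref{lifted_recursion_operator_3}--\eqref{lifted_recursion_operator_4} involve $D_y(\chi_{k-1})$, $D_x(\chi_{k-1})$ and $D_t(\chi_1)$, $D_x(\chi_1)$ (not $D_t(\chi_{k-1})$), but this does not affect the argument.
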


The proof is fulfilled along the same lines as that of
Proposition~\ref{sec:lifts-local-symm-prop-2}.

Unfortunately, we could not find a similar recursion operator for nonlocal
symmetries in the negative covering.

\section{Conclusion}
\label{sec:discussion}

We gave a complete description of nonlocal symmetries associated to the Lax
representation of the 3D rdDym equation. The revealed Lie algebra structure of
these symmetries seems quite interesting and we intend to study nonlocal
symmetries of other Lax integrable equations from~\cite{Fer-Mos} in the
forthcoming research.

\section*{Acknowledgments}
\label{sec:acknowledgements}
Computations were supported by the \textsc{Jets} software,~\cite{Jets}. The
second author (ISK) was partially supported by the Simons-IUM fellowship.


\begin{thebibliography}{99}

\bibitem{Blaszak} M.~B{\l}aszak. {\it Classical R-matrices on Poisson algebras
    and related dispersionless systems}. Phys.\ Lett.\ A, {\bf 297}
  (2002), 191--195

\bibitem{Pavlov2003} M.V.~Pavlov. {\it Integrable hydrodynamics chains}.  
J.\ Math.\ Phys., {\bf 44} (2003), 4134--4156

\bibitem{Morozov2009} O.I.~Morozov. {\it Contact integrable extensions of
    symmetry pseudo-groups and coverings of (2+1) dispersionless integrable
    equations}. J.\ Geom.\ Phys., {\bf 59} (2009), no.\ 11, 1461--1475

\bibitem{Ovsienko2010} V.~Ovsienko. {\it Bi-Hamiltonian nature of the
    equation} $u_{tx}= u_{xy} \,u_y - u_{yy}\, u_x$. Adv.\ Pure Appl.\ Math.,
  {\bf 1} (2010), 7--17

\bibitem{Fer-Mos} E.V.~Ferapontov, J.~Moss, \emph{Linearly degenerate partial
    differential equations and quadratic line complexes}, Communications in
  Analysis and Geometry, \textbf{23} (2015) no.~1, 91--127,
  \texttt{arXiv:1204.2777} [math.DG].

\bibitem{Morozov2012} O.I.~Morozov. {\it Recursion operators and nonlocal
    symmetries for integrable rmdKP and rdDym
    equations}. \texttt{arXiv:1202.2308} [nlin.SI]

\bibitem{BKMV-2014} H.~Baran, I.S.~Krasil{\cprime}shchik, O.I.~Morozov,
  P.~Voj\v{c}\'{a}k, \emph{Symmetry reductions and exact solutions of Lax
    integrable $3$-dimensional systems}, Journal of Nonlinear Mathematical
  Physics, \textbf{21} (2014) no.~4, 643--671, \texttt{arXiv:1407.0246}
  [nlin.SI]

\bibitem{BKMV-2015} H.~Baran, I.S.~Krasil{\cprime}shchik, O.I.~Morozov,
  P.~Voj\v{c}\'{a}k, \emph{Integrability properties of some equations obtained
    by symmetry reductions}, Journal of Nonlinear Mathematical Physics,
  \textbf{22} (2015) no.~2, 210--232, \texttt{arXiv:1412.6461} [nlin.SI]

\bibitem{VinKrasTrends} I.S.~Krasil{\cprime}shchik, A.M.~Vinogradov,
  \emph{Nonlocal trends in the geometry of differential equations: symmetries,
    conservation laws, and B\"{a}cklund transformations}, Acta Appl.\ Math.,
  \textbf{15} (1989) 1-2, 161--209.

\bibitem{AMS} A.V.~Bocharov et al., \emph{Symmetries of Differential Equations
    in Mathematical Physics and Natural Sciences}, edited by A.M.~Vinogradov
  and I.S.~Krasil{\cprime}shchik). Factorial Publ.\ House, 1997 (in
  Russian). English translation: Amer.\ Math.\ Soc., 1999.

\bibitem{MartinezAlonsoShabat2002} L.~Mart{\'{\i}}nez Alonso, A.B.~Shabat.
  {\it Energy-dependent potentials revisited: a universal hierarchy of
    hydrodynamic type}.  Phys.\ Lett.\ A {\bf 299} (2002), 359--365.

\bibitem{MartinezAlonsoShabat2004} L.~Mart{\'{\i}}nez Alonso, A.B.~Shabat.
  {\it Hydrodynamic reductions and solutions of a universal
    hierarchy}. Theor.\ Math.\ Phys.  {\bf 140} (2004), 1073--1085.

\bibitem{Morozov2012_SIGMA} O.I.~Morozov.  {\it A two-component generalization
    of the integrable rdDym equation}. SIGMA {\bf 8} (2012), 051, 5 pages

\bibitem{Jets} H.~Baran, M.~Marvan, \emph{Jets. A software for differential
    calculus on jet spaces and diffeties}.  \url{http://jets.math.slu.cz}.

\bibitem{Mar-another} M.~Marvan, \emph{Another look on recursion operators},
  in: Differential Geometry and Applications, Proc.\ Conf.\ Brno, 1995
  (Masaryk University, Brno, 1996) 393--402.
\end{thebibliography}
\end{document}